\newcommand{\Rom}[1]{\MakeUppercase{\romannumeral #1}}
\theoremstyle{remark}
\newtheorem{remark}{Remark}
\newtheorem{lemma}{Lemma}
\newtheorem{theorem}{Theorem}
\newtheorem{proposition}{Proposition}
\newtheorem{definition}{Definition}
\newtheorem{example}{Example}
\title{
Decentralized State Estimation and Opacity Verification Based on Partially Ordered Observation Sequences}
\author{Dajiang Sun, Christoforos N. Hadjicostis, \IEEEmembership{Fellow, IEEE} and Zhiwu Li, \IEEEmembership{Fellow, IEEE}
	\thanks{Dajiang Sun is with the School of Electro-Mechanical Engineering, Xidian University, Xi'an, 710071, China (e-mail: djsun@stu.xidian.edu.cn).}%
	\thanks{Christoforos N. Hadjicostis is with the Department of Electrical and Computer Engineering, University of Cyprus, Nicosia, Cyprus (e-mail: hadjicostis.christoforos@ucy.ac.cy).}
	\thanks{Zhiwu Li is with the School of Electro-Mechanical Engineering, Xidian University, Xi'an, 710071, China, and also with the Institute of Systems Engineering, Macau University of Science and Technology, Taipa 999078, China (e-mail: zhwli@xidian.edu.cn).}
	}
\begin{document}

\pagenumbering{arabic}
\maketitle

\begin{abstract}
In this paper, we investigate state estimation and opacity verification problems within a decentralized observation architecture.
Specifically, we consider a discrete event system whose behavior is recorded by a set of observation sites. 
These sites transmit the partially ordered sequences of observations that they record to a coordinator whenever a \textit{synchronization} occurs.
To properly analyze the system behavior from the coordinator's viewpoint, we first introduce the notion of a Complete Synchronizing Sequence structure (CSS structure), which concisely captures the state evolution of each system state upon different information provided by the observation sites.
Based on the CSS structure, we then construct corresponding current-state and initial-state estimators for offline state estimation at the coordinator.
When used to verify state-isolation properties under this decentralized architecture, the use of CSS structure demonstrates a significant reduction in complexity compared with existing approaches in the literature.
In particular, we discuss how to verify initial-state opacity at the coordinator, as well as a novel opacity notion, namely current-state-at-synchronization opacity.
\end{abstract}
\begin{IEEEkeywords}
Decentralized state estimation, discrete event system, finite state automaton, opacity, synchronization.
\end{IEEEkeywords}

\section{Introduction}
\IEEEPARstart{W}{ith} the proliferation of information technology, network connectivity, and sensing capabilities, there is an increasing utilization of highly complex systems composed of multiple components that collaborate to fulfill intricate tasks, such as communication networks, autonomous vehicles, and other advanced cyber and cyber-physical systems.
These systems are often mathematically captured using discrete event systems (DESs), which are typically characterized by discrete states and event-triggered dynamics.
DESs are widely used in modeling and analyzing the higher logical behavior of these complex systems.
Due to the limited sensing capability of a plant, we may not have full information about the system’s internal state, which is needed to solve an array of control problems and/or to verify certain properties of the system. 
Therefore, the state estimation problem \cite{Ramadge1986} is of great importance in practical applications. 

Depending on the time instant at which state estimates are needed, different notions have been proposed and studied, such as current-state estimation, delayed-state estimation and initial-state estimation \cite{Hadjicostis2020, WangLafortuneLin2007, LinWangHanShe2020, HanWangLiChenChen2023}. 
Dealing with these estimation tasks depends on the observed system behavior. 
Normally, an observation architecture consists of a system itself, an observation site that observes the system, and a computational unit that analyzes the data provided by the observation site. 
However, emerging complex and large-scale systems have given rise to decentralized observation structures \cite{DeboukLafortuneTeneketzis2000}, where there may not exist a monolithic observation site that is able to record all accessible information generated by the system. 
In this paper, we concentrate on initial- and current-state estimation (and corresponding property verification problems) under a decentralized observation structure, where the information received by the coordinator is a set of partially ordered sequences of observations recorded at local sites.

State estimation serves as the basis for various so-called state isolation properties, such as diagnosability \cite{Sampath1995, YooLafortune2002,JiangHuang2001}, detectability \cite{ShuLinYing2007,YinLiWang2018}, and opacity \cite{SabooriHadjicostis2007,JacobLesageFaure2016}. 
In the absence of a monolithic observer, two main architectures---decentralized and distributed---are used to describe property verification.
Debouk \textit{et al}. \cite{DeboukLafortuneTeneketzis2000} focus on the case of two observation sites and propose three protocols for coordinated decentralized diagnosis, with the third protocol being directly linked to  \textit{co-diagnosability}. 
The algorithms for verifying co-diagnosability proposed in \cite{QiuKumar2006} have polynomial complexity with respect to (w.r.t.) the number of system states, but their overall complexity grows exponentially with the number of local diagnosers \cite{Cassez2012}.
Furthermore, the co-diagnosability of a networked DES, considering communication delays and intermittent losses of observations, is addressed in \cite{NunesMoreiraAlvesCarvalhoBasilio2018}.
The work in \cite{TakaiUshio2012} studies a decentralized fault diagnosis problem modeled by Mealy automata under state-dependence and nondeterministic output functions.
Additionally, co-diagnosability and $K$-co-diagnosability are explored within the framework of Petri nets \cite{RanSuGiuaSeatzu2018}.
The problems of decentralized diagnosis under disjunctive and conjunctive decision-making processes are analyzed in \cite{TomolaCabralCarvalhoMoreira2017, TaKaiKumar2017}.
A general distributed protocol for fault diagnosis and state estimation, called DiSIR, is studied in \cite{KeroglouHadjicostis2018} in the absence of a coordinator.
In \cite{OliveiraCabralMoreira2022}, the authors address the problem of robust decentralized diagnosis of DES against transient failures in the communication of the observations to the local diagnosers.
Li \textit{et. al.} \cite{LiHadjicostisWu2021}  deal with decentralized fault diagnosis when the information received by the coordinator involves state estimates, 
possible corrupted due to attacks on the communication channels.
For more details, readers are referred to survey papers \cite{ZaytoonLafortune2013,LafortuneLinHadjicostis2018,BasilioHadjicostisSu2021}  and book \cite{Hadjicostis2020} on this topic.

The notion of opacity introduced in \cite{Mazare2004, BryansKoutnyMazarRyan2008} for transition systems ensures that an intruder cannot determine whether a predicate representing secret information is true.
Since then, various notions related to opacity (e.g., initial-state, current-state, delayed step, and infinite-step opacity), depending on different security requirements, have drawn considerable attention in the context of DESs modeled by automata \cite{SabooriHadjicostis2007,SabooriHadjicostis2008ini,SabooriHadjicostis2011,WuLafortune2013, YinLafortune2017}. 
The verification of these notions in systems modeled by Petri nets has also been explored; see, e.g., \cite{YongLiSeatzuGiua2017, TongLanSeatzu2022,CongFantiManginiL2018, CongFantiManginiLi2019}.
The works in \cite{DongWuLi2024,YangDengQiuJiang2021} extend the notion of opacity to networked discrete event systems, considering communication losses and delays.
Recently, the concept of strong opacity is established in \cite{FalconeMarchand2015,MaYinLi2021,HanZhangZhangLiChen2023}, which focuses not only on whether a system is in a secret state at a given instant but also on whether the past visit to a secret state can be determined by an outside observer.
In \cite{PaoliLin2012}, decentralized opacity is investigated by considering whether there is a coordinator among local sites or not. In \cite{WuLafortune2013}, the authors define several notions of joint opacity based on a decentralized structure where the information sent to the coordinator from local intruders consists of  state estimates.
Zhu \textit{et. al.} in \cite{ZhuLiWu2022} introduce $K$-step opacity in centralized and decentralized structures within the framework of Petri nets.

In general, decentralized information processing depends on three components: 
1) the kind of information that a local site sends to the coordinator; 
2) the instants at which synchronizations are initiated (i.e., when the local sites send local information or decisions to the coordinator); 
3) the rules that the coordinator follows to calculate global information. The first ingredient could be a sequence of observations, a set of state estimates, or a local decision (depending on the local site’s own observation). 
The second ingredient is the synchronization strategy, whereas the last ingredient correlates with the goal that we wish to perform (e.g., fault diagnosis). 
The key feature in a decentralized information setting is that no information is sent back to or communicated among local sites, which is significantly different from a distributed observation setting where such feedback is allowed.

The problem of current-state estimation within a single synchronization step under the DO-based protocol was first studied in \cite{Hadjicostis2016Partial}, where each OS sends an observation sequence to the coordinator.
A partial-order-based method was proposed to recursively compute possible system states, assuming that the sets of observable events at all OSs are disjoint.
Our previous work \cite{SunHadjicostisLi2023} relaxed this assumption by introducing the sequence-builder, a key component for constructing the synchronizer, which estimates system states via breadth-first search.
In \cite{SunHadjicostisLi2025}, the DO-based protocol was further extended to account for bounded global and local observation errors.


The above studies focus on the online estimation problem, i.e., systematically reconstructing, within a single synchronization step, the possible totally ordered observation sequences, from which the current-state estimate can be obtained.
In contrast, this paper aims to construct state estimators from the perspective of the coordinator.
Under the DO-based protocol with a given synchronization strategy, we first define the notion of a \textit{Complete Synchronizing Sequence structure} (CSS structure), which can be used to interpret a given synchronization strategy.
The CSS structure essentially encompasses all information that may be received at the coordinator, as well as the corresponding state evolution upon receiving this information.
Consequently, once the CSS structure has been constructed, the problem of state estimation can be addressed by simply taking unions of states without a complex process at each synchronization.

Subsequently, we propose and verify state isolation properties, such as 
initial-state opacity and current-state-at-synchronization opacity under the DO-based protocol.
Current-state-at-synchronization opacity is a new notion, proposed to capture the situation where an outside observer can never be certain whether the system is in a secret state immediately after a synchronization.
The remainder of this paper is organized as follows.
In Section \Rom{2}, we review some background on language and automata theory, and formulate the DO-based protocol under a specific synchronization strategy. 
The main contributions are developed as follows.
\begin{itemize}
	\item In Section \Rom{3}, the notion of CSS structure and its properties/complexity are presented; we also construct the feasible CSS structure, from which the current-state and initial-state estimators can be obtained.
	\item In Section \Rom{4}, the verification of DO-based initial-state opacity is proposed using two approaches, one with doubly-exponential complexity and the other with exponential complexity. 
    Additionally, the concept of DO-based current-state-synchronization opacity is defined and its verification is developed.
    We also delve into the discussion on how to verify diagnosability under the DO-based protocol.
\end{itemize}

Section \Rom{5} concludes this article. 
Note that there is no overlap in contributions between this paper and our work in~\cite{SunHadjicostisLi2023} that focuses on online state estimation; this paper assumes instead a fixed synchronization strategy and focuses on the verification of properties of interest.

\section{Preliminaries}\label{sec-2}

\subsection{System Model}

Let $E$ be a finite set of symbols (events). 
A string over $E$ is a sequence of $n$ events, i.e., $s=\alpha_1\alpha_2\dots\alpha_i\dots\alpha_n$, $\alpha_i\in E$, $i\in\{1,2,\dots,n\}$.
The length of $s$ is the number of events in the sequence, denoted by $|s|$. 
We denote by $E^*$ the set of all finite-length strings over $E$, including the empty string $\epsilon$ with $|\epsilon|=0$. 
A language $L\subseteq E^*$ is a set of strings \cite{Hadjicostis2020, CassandrasLafortune2008}. 
Given strings $s$, $t$ $\in E^*$, the concatenation of $s$ and $t$ is the string $st$ (also denoted as $s\cdot t$), i.e., the sequence of symbols in $s$ followed by that in $t$. 
For any $\sigma\in E$, $s\in E^*$, we use $\sigma\in s$ to denote that $\sigma$ occurs in $s$, i.e., $s=u\sigma v$ for some strings $u,v\in E^*$.
Let $\bar{s}$ be the prefix-closure of $s$, i.e., $\bar{s}=\{t\in E^*|\exists t'\in E^*: tt'=s\}$, and $s/t$ be the symbol sequence after $t$ in $s$, i.e., for $t\in\overline{s}$, we have $t\cdot(s/t)=s$ (note that $s/s=\epsilon$ and $s/\epsilon=s$).

A DES considered in this paper is modeled as a nondeterministic finite automaton (NFA) $G=(X, E, \delta, X_0)$, where $X$ is the finite set of states, $E$ is the finite set of events, $\delta:X\times E\rightarrow 2^X$ is the next-state transition function, and $X_0$ is the set of possible initial states. 
For a set $X'\subseteq X$ and $\sigma\in E$, we define $\delta(X',\sigma)=\bigcup_{x'\in X'}\delta(x',\sigma)$;
with this notation at hand, the transition function $\delta$ can be extended recursively to $\delta^*$ (denoted by $\delta$ for the sake of brevity) whose domain is $X\times E^*$ instead of $X\times E$:
$\delta(x,\epsilon)=x$ and $\delta(x,\sigma s)=\delta(\delta(x,\sigma),s)$ for any $x\in X$, $\sigma\in E$, and $s\in E^*$. 
The system execution (or behavior) of $G$ starting from state $x$ is captured by $L(G,x)=\{s\in E^*|\delta(x, s)\neq\emptyset\}$.
For a set $X'\subseteq X$, we define $L(G,X')=\bigcup_{x'\in X'}L(G,x')$.
The system behavior generated by $G$ is described by $L(G)=L(G,X_0)$.

\subsection{Decentralized Observation-Based Information Processing}
We assume that the system is observed by $m$ OSs $O_i$ and let $\mathcal{I} = \{1, 2, \dots,m\}$ be an index set.
For $i\in \mathcal{I}$, we denote by $E_{i}$ the set of events that can be observed by $O_i$. 
The natural projection function $P_{i}:E^*\rightarrow E^*_{i}$ maps any system behavior to the sequence of observations associated with it, via
\begin{align*}
	P_{i}(\epsilon)=\epsilon 
	\quad 
	\text{and}
	\quad
	P_{i}(s\sigma)=
	\begin{cases}
		P_{i}(s)\sigma, &\text{if} \quad  \sigma\in E_{i},\\
		P_{i}(s),& \text{if} \quad \sigma\notin E_{i}.
	\end{cases}
\end{align*}
For any language $L\subseteq E^*$, define $P_i(L)=\{\omega\in E_i^*|\exists s\in L: P_i(s)=\omega\}$.
For each event $\sigma\in E$, we use $I(\sigma)=\{i\in\mathcal{I}|\sigma\in E_i\}$ to denote the index set of OSs that can observe $\sigma$.

\begin{figure}[tbp]
	\centering
	\includegraphics[scale=0.4]{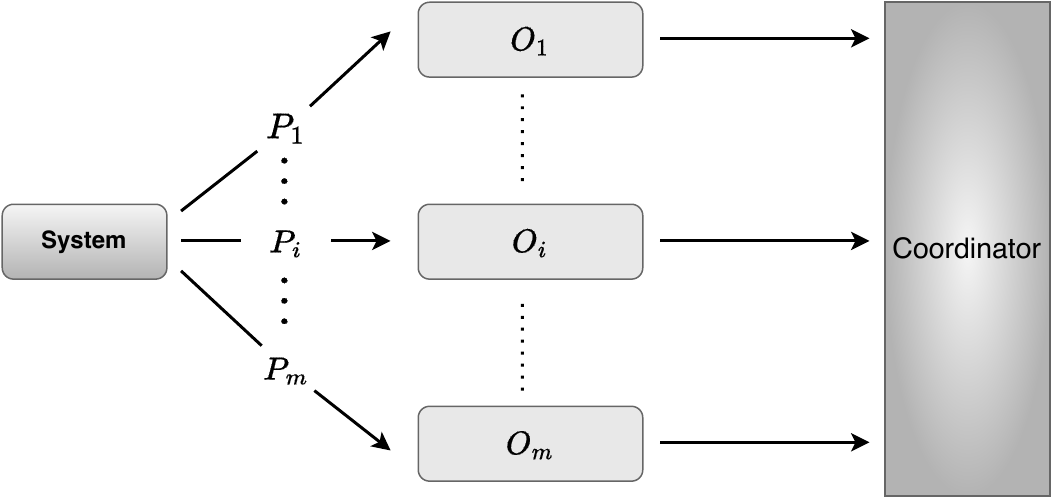}
	\caption{Decentralized observation architecture.}
	\label{architecture}
\end{figure}

The decentralized observation-based information processing, abbreviated as DO-based protocol, adopted in this work is depicted in Fig. \ref{architecture}.
We assume that OSs have no knowledge of the system model, while the coordinator knows the system model and the DO-based protocol introduced below.
Let $E_{\mathcal{I}}=\bigcup_{i\in\mathcal{I}}E_i$ denote the set of observable events from the coordinator’s perspective, and let $P_{\mathcal{I}}$ be the corresponding natural projection function.
To formally describe the DO-based protocol, we first introduce its core mechanism: \textit{synchronization}.
As previously noted, the system is monitored by OSs, each recording its own sequence of observable events.
A synchronization is a process in which the coordinator instructs all OSs to transmit their recorded sequences, and all OSs simultaneously do so.
After synchronization, each OS resets its recorded sequence to $\epsilon$ (the empty string).
As the system evolves, OSs continue recording events, and this synchronization process is repeated.
Given a system $G=(X, E, \delta, X_0)$, we define a \textit{run} as a sequence of alternating system behaviors and synchronizations: $\vec{s}=s_1{\scriptstyle\sim}s_2{\scriptstyle\sim}\dots s_j{\scriptstyle\sim}s_{j+1}$ where $s=s_1 s_2 \dots s_{j+1}\in L(G)$ and ``${\scriptstyle\sim}$'' denotes a synchronization.
We define $\tilde{s}$ as the longest prefix of $s$ after which a synchronization can be immediately initiated.
If no such prefix exists, we let $\tilde{s}=\epsilon$.
In the above run $\vec{s}$, we have $\tilde{s}=s_1 s_2 \dots s_{j}$.
The specific condition under which a synchronization is initiated will be introduced later.
We now present the DO-based protocol, which comprises three components.

1) The first component specifies the information recorded by $O_i$ and the condition under which each $O_i$ signals the coordinator to request a synchronization: Each $O_i$ is associated with a projection function $\mathcal{P}_i$ and a sync-request function $\mathcal{C}_i$,
\begin{align*}
	\mathcal{P}_i:L(G)\rightarrow E^*_i
	\quad 
	\text{and}
	\quad
	\mathcal{C}_i:\mathcal{P}_i(L(G))\rightarrow \{0,1\}.
\end{align*}
For any $s\in L(G)$, $\mathcal{P}_i(s)=P_i(s/\tilde{s})$, which records the PO-sequence observed by $O_i$ since the last synchronization (whose timing is determined in ``2) Synchronization decision'').
Function $\mathcal{C}_i$ decides whether (decision ``1'') or not (decision ``0'') $O_i$ signals the coordinator to request a synchronization.


2) The synchronization decision is the second component, which defines the condition under which the coordinator actually initiates synchronization based on the OSs’ signals. 
This decision process is governed by the function
$f:\mathcal{C}_1(\mathcal{P}_1(L(G)))\times\mathcal{C}_2(\mathcal{P}_2(L(G)))\times\dots\times\mathcal{C}_m(\mathcal{P}_m\allowbreak(L(G)))\rightarrow\{0,1\}$, 
where a decision value of ``1'' indicates that the coordinator initiates a synchronization, while a value of ``0'' implies that no such action is taken.

We refer to the functions $\{C_i\}_{i\in \mathcal{I}}$ and $f$ as the \textit{synchronization strategy}, which can be represented by a function $f_{\mathcal{I}}: L(G)\rightarrow\{0,1\}$, where, for any $s\in L(G)$, $f_{\mathcal{I}}(s)=f(\mathcal{C}_1(\mathcal{P}_1(s)),\dots,\mathcal{C}_m(\mathcal{P}_m(s)))$.
For instance, given a run $\vec{s}=s_1{\scriptstyle\sim}s_2{\scriptstyle\sim}\dots s_j{\scriptstyle\sim}s_{j+1}$, we have $f_{\mathcal{I}}(s_1 s_2 \dots s_k) = 1$ for all $k \in \{1, \dots, j\}$.
The specific realization of $\{C_i\}_{i\in \mathcal{I}}$ and $f$ considered in this paper will be formally introduced later.

3) The synchronization function: 
The coordinator receives PO-sequences provided by OSs and uses this information for state estimation or decision making. 
Specifically, when designated to perform state estimation, this task can be formalized as the synchronization function below:
\begin{align*}
	\mathcal{S}:(E_1^*\times\dots\times E_m^*)^*\times 2^X\rightarrow 2^X,
\end{align*}
which, depending on the time instant at which the set of possible states needs to be determined, results in DO-based current- or initial-state estimation.

The above three components collectively define the DO-based protocol.
As the third component is task-dependent, the DO-based protocol is denoted as $\Upsilon=(\mathcal{P},f_{\mathcal{I}})$, derived from the first two components, where $\mathcal{P}$ abstracts the set $\{\mathcal{P}_i\}_{i\in\mathcal{I}}$.

\begin{definition}\label{def-projection}
	Given a system $G$ under DO-based protocol $\Upsilon=(\mathcal{P}, f_{\mathcal{I}})$, the decentralized observation synchronization-based projection (DO-projection)  $P_{\Upsilon}:L(G)\rightarrow (E_1^*\times\dots\times E_m^*)^*$ is defined recursively as follows: for any $s\in L(G)$,
	\begin{itemize}[leftmargin=10pt]
\item $P_{\Upsilon}(s)=(\epsilon,\dots,\epsilon)$ if $\tilde{s}=\epsilon$;
\item $P_{\Upsilon}(s)=P_{\Upsilon}(\tilde{s})$ if $\tilde{s}\neq\epsilon$, where 
\begin{itemize}[leftmargin=7.8pt]
\item $P_{\Upsilon}(\tilde{s})=(P_1(\tilde{s}),\dots,P_m(\tilde{s}))$ if $\nexists s'\in\overline{\tilde{s}}\backslash\{\tilde{s}\}: f_{\mathcal{I}}(s')=1$;
\item $P_{\Upsilon}(\tilde{s})=P_{\Upsilon}(s')\allowbreak(P_1(s''),\dots,P_m(s''))$ if  $\exists s'\in\overline{\tilde{s}}\backslash\{\tilde{s}\}:f_{\mathcal{I}}(s')=1\land(\forall t\in\overline{\tilde{s}}\setminus(\overline{s'}\cup\{\tilde{s}\}):f_{\mathcal{I}}(t)=0)\land s''=\tilde{s}/ s'$.
\end{itemize}
\end{itemize}
\end{definition}

Intuitively, at each synchronization, the coordinator can only access partially ordered behaviors recorded by the $m$ OSs since the last synchronization.
This fundamental property is captured by the DO-projection $P_{\Upsilon}(s)$. 
Given a run $\vec{s}=s_1{\scriptstyle\sim} s_2{\scriptstyle\sim}\dots  s_j{\scriptstyle\sim}\allowbreak s_{j+1}$,
 $P_{\Upsilon}(s)=(P_1(s_1),\dots,P_m(s_1))\dots(P_1(s_j),\allowbreak\dots,P_m\allowbreak(s_j))$ represents the sequence of PO-sequences received by the coordinator from the start of the system execution.
We refer to the PO-sequences as the synchronization information state (SI-state), denoted by $SI(s)=(\mathcal{P}_1(s),\dots,\mathcal{P}_m(s))$.\footnote{At each synchronization, the information recorded at an OS undergoes two stages:  the sequence of observations before the synchronization and empty string $\epsilon$ after the synchronization. 
In the rest of this paper, the content of an SI-state $SI(s)$ refers to the information recorded before synchronization.}
In other words, an SI-state is the information recorded at OSs between two consecutive synchronizations.
For simplicity, we denote an SI-state as $\tau=(\tau^{(1)},\dots,\tau^{(m)})\in E_1^*\times\dots\times E_m^*$, and use $\tau_0$ to represent the initial SI-state $(\epsilon,\dots,\epsilon)$ after synchronization.
In this paper, we implement the DO-based protocol $\Upsilon_{\mathbb{N}}=(\mathcal{P},f_{\mathcal{I}})$, with the following conditions: Given a set of constants $\{\kappa_i\}_{i\in\mathcal{I}}$, $\kappa_i\in\mathbb{N}$, for any $s\in L(G)$,
\begin{enumerate}
\item $\forall i\in\mathcal{I}: \mathcal{C}_i(\mathcal{P}_i(s))=1\Leftrightarrow|\mathcal{P}_i(s)|=\kappa_i$;
\item $(\exists i\in\mathcal{I}: \mathcal{C}_i(\mathcal{P}_i(s))=1)\Leftrightarrow f_{\mathcal{I}}(s)=1$.
\end{enumerate}

Then, from  the coordinator's perspective, the set of states reachable from a state in $X'$, with $X'\subseteq X$, upon the occurrence of $\sigma\in E_{\mathcal{I}}\cup \{\epsilon\}$ is given by\footnote{We also write $\operatorname{R}_{\sigma}(x)$ when $X'=\{x\}$, i.e., when $|X'|=1$.} 	$\operatorname{R}_{\sigma}(X')=\{x\in X|\exists u\in E^*: x\in\delta(X',u)\land P_{\mathcal{I}}(u)=\sigma\}$,
whereas the unobservable reach of $X'\subseteq X$ is defined as $\operatorname{UR}(X')=\operatorname{R}_{\epsilon}(X')$.

\begin{remark}
The synchronization strategy $\Upsilon_{\mathbb{N}}$ used in this paper is chosen for its simplicity, which facilitates a clear presentation of the mechanisms and results.
However, the proposed methodologies are not restricted to this specific strategy, as the relevant definitions and methods can be readily adapted to any finite state synchronization strategy within the DO-based protocol framework.
\end{remark}

\begin{figure}[tbp]
	\centering
	\includegraphics[scale=0.76]{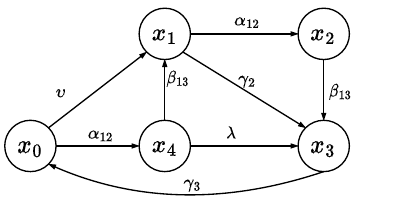}
	\caption{An NFA model $G$ under DO-based protocol, where $E_1=\{\alpha_{12}, \beta_{13}\}$, $E_2=\{\alpha_{12},\gamma_{2}\}$, and $E_3=\{\beta_{13},\gamma_{3}\}$. The subscripts of events in $E_{\mathcal{I}}=E_1\cup E_2\cup E_3$ indicate the indices of the OSs that can observe the corresponding events. Events $\upsilon$ and $\lambda$ are unobservable to all OSs.}
	\label{fig-example-1}
\end{figure}

\begin{example}\label{example-1}

Consider the system $G$ shown in Fig.~\ref{fig-example-1} where there exist three OSs, i.e., $\mathcal{I}=\{1,2,3\}$.
Let $\Upsilon_{\mathbb{N}}=(\mathcal{P}, f_{\mathcal{I}})$ be its synchronization strategy so that for all $i\in\mathcal{I}$, $\kappa_i=2$, indicating that each OS can record a sequence with at most 2 events before it signals the coordinator to initiate a synchronization.

Initially, the SI-state is $\tau_0=(\epsilon,\epsilon,\epsilon)$, indicating that no events in $E_{\mathcal{I}}$ have been executed within the system.
Now, consider the sequence $s_1=\alpha_{12}\lambda\gamma_3\alpha_{12}$, which occurs in the system, starting from state $x_0$ and ending in state $x_4$.
We analyze this process event by event:

1) When the first event of $s_1$, i.e., $\alpha_{12}$, occurs, since $\alpha_{12}\in E_1$ and $\alpha_{12}\in E_2$, both $O_1$ and $O_2$ record its occurrence. 
The resulting SI-state is $SI(\alpha_{12})=(\alpha_{12},\alpha_{12},\epsilon)$;

2) Next, when the second event of $s_1$, i.e., $\lambda$, occurs, since $\lambda\notin E_{\mathcal{I}}$, none of OSs records its occurrence; so the SI-state remains $SI(\alpha_{12}\lambda)=(\alpha_{12},\alpha_{12},\epsilon)$;

3) By analogy, when the third event of $s_1$, i.e., $\gamma_3$, occurs, the SI-state is $SI(\alpha_{12}\lambda\gamma_3)=(\alpha_{12},\alpha_{12},\gamma_3)$;

4) Finally, when the last event of $s_1$, i.e., $\alpha_{12}$, occurs, $SI(\alpha_{12}\lambda\gamma_3\alpha_{12})=(\alpha_{12}\alpha_{12},\alpha_{12}\alpha_{12},\gamma_3)$.

At this point, note that $|\alpha_{12}\alpha_{12}|=2=\kappa_1=\kappa_2$, which implies that $\mathcal{C}_1(\alpha_{12}\alpha_{12})=\mathcal{C}_2(\alpha_{12}\alpha_{12})=1$.
This causes $O_1$ and $O_2$ to signal the coordinator to initiate a \textit{synchronization}.
Consequently, function $f_{\mathcal{I}}(s_1)=1$.
As a result, the coordinator requests information from all OSs and subsequently receives SI-state $(\alpha_{12}\alpha_{12},\alpha_{12}\alpha_{12},\gamma_3)$.
Specifically, each OS sends its partially ordered sequence of observations to the coordinator: $O_1$ sends $\alpha_{12}\alpha_{12}$, $O_2$ sends $\alpha_{12}\alpha_{12}$, and $O_3$ sends $\gamma_3$.
After this transmission, the SI-state resets to $\tau_0=(\epsilon,\epsilon,\epsilon)$, indicating that no events have been observed since the last transmission of information to the coordinator.

Meanwhile, the coordinator estimates the system states based on the received SI-state.
If the coordinator knows the system starts from state $x_0$, then its set of state estimates before receiving the SI-state is $\operatorname{UR}(x_0)=\{x_0,x_1\}$.
Therefore, at this synchronization, the coordinator needs to update its estimates based on $\{x_0,x_1\}$ and SI-state $\tau=(\alpha_{12}\alpha_{12},\alpha_{12}\alpha_{12},\gamma_3)$.
This can be done by examining the possible executions in the system that match the received SI-state.
For example, the possible system behaviors starting from $x_0$ or $x_1$ could be 
$t_1=\alpha_{12}\lambda\gamma_{3}\alpha_{12}$, $t_2=\alpha_{12}\lambda\gamma_{3}\upsilon\alpha_{12}$, or $t_3=\alpha_{12}\lambda\gamma_{3}\alpha_{12}\lambda$, 
since for any $i\in\mathcal{I}=\{1,2,3\}$ $P_i(t_1)=P_i(t_2)=P_i(t_3)=\tau^{(i)}$.
Then, the state estimates obtained by the coordinator are $\delta(\{x_0,x_1\},t_1)\cup\delta(\{x_0,x_1\},t_2)\cup\delta(\{x_0,x_1\},t_3)=\{x_2,x_3,x_4\}$

Assume that the sequence $s_2=\beta_{13}\gamma_{2}\gamma_{3}\alpha_{12}$ is subsequently executed in the system. 
A synchronization happens after $\beta_{13}\gamma_{2}\gamma_{3}$ due to
$SI(s_1\beta_{13}\gamma_{2}\gamma_{3})=(\beta_{13},\gamma_{2},\beta_{13}\gamma_{3})$ 
with $|\beta_{13}\gamma_{3}|=\kappa_3$. 
In this case, the corresponding run can be represented as $\vec{s}=s_1{\scriptstyle\sim} \beta_{13}\gamma_{2}\gamma_{3} {\scriptstyle\sim}\alpha_{12}$ 
and $\tilde{s}=s_1\beta_{13}\gamma_{2}\gamma_{3}$.
Based on Definition~\ref{def-projection}, we have $P_{\Upsilon_{\mathbb{N}}}(s)=SI(s_1)SI(\beta_{13}\gamma_{2}\gamma_{3})=(\alpha_{12}\alpha_{12},\alpha_{12}\alpha_{12},\gamma_3)(\beta_{13},\gamma_{2},\beta_{13}\gamma_{3})$.~\hfill\rule{1ex}{1ex}
\end{example}

\section{State Estimation and Observer Construction Under DO-based Protocol}
In this section, we first introduce and analyze the notion of \textit{Complete Synchronizing Sequence structure} (CSS structure), which essentially encompasses all SI-states and the state evolution of each state upon any SI-state. 
Following this, we demonstrate how the CSS structure can be utilized to construct the DO-based state estimator.

In \cite{SunHadjicostisLi2023}, the notions of DO-based current-state estimation (DO-CSE) and DO-based initial-state estimation (DO-ISE) are proposed to capture the current-state estimate and initial-state estimate after the coordinator receives an SI-state $\tau$ based on the latest state estimates. 
Due to the more general scenarios analyzed in this paper,  these two notions are redefined as follows.
Given a system $G=(X,E,\delta,X_0)$ under DO-based protocol $\Upsilon_{\mathbb{N}}$, suppose that a run $\vec{s}$, i.e., a sequence of alternating system behaviors and synchronizations,  is generated by the system:
\begin{enumerate}
	\item DO-CSE aims to determine the set of states that the system possibly resides in immediately after the last synchronization, including the states that may be reached via unobservable events, i.e.,
	\begin{multline*}
		\mathcal{E}^c(P_{\Upsilon_{\mathbb{N}}}(s),X_0)=\{x\in X|\exists x_0\in X_0,\exists u\in L(G), \\ P_{\Upsilon_{\mathbb{N}}}(s)=P_{\Upsilon_{\mathbb{N}}}(u)\wedge x\in\operatorname{UR}(\delta(x_0,\tilde{u}))\},
	\end{multline*} where $\tilde{u}$ is the prefix of $u$ that is immediately followed by the last synchronization.
	
	\item DO-ISE aims to determine the set of initial states after several synchronizations, i.e.,
	\begin{multline*}
		\mathcal{E}^{i}(P_{\Upsilon_{\mathbb{N}}}(s),X_0)=\{x_0\in X_0|\exists u\in L(G,x_0):\\
		P_{\Upsilon_{\mathbb{N}}}(s)=P_{\Upsilon_{\mathbb{N}}}(u)\}. 
	\end{multline*}
\end{enumerate}
Based on the definition of synchronization, for any string $u\in L(G)$, if $\tilde{u}\neq\epsilon$, then $f_{\mathcal{I}}(\tilde{u})=1$. 
The last event in $\tilde{u}$ is always observable to at least one of the OSs. 
Thus, we use operator $\operatorname{UR}(\cdot)$ to extend the state estimates to include the unobservable reach of $\delta(x_0,\tilde{u})$ (w.r.t. the set of events that are not observable to \textit{any} OS).
This definition allows for a more comprehensive coverage of the system's possible states after a sequence of events.

\begin{remark}\label{remark-aftersync}
	Within the framework of DO-CSE, $\mathcal{E}^c(P_{\Upsilon_{\mathbb{N}}}(s),X_0)$ denotes the state estimate right after $\tilde{s}$ (note that $P_{\Upsilon_{\mathbb{N}}}(s)=P_{\Upsilon_{\mathbb{N}}}(\tilde{s})$) including the unobservable reach w.r.t. the coordinator.
	In this context, $\mathcal{E}^c(P_{\Upsilon_{\mathbb{N}}}(s),X_0)$ is equivalent to $\mathcal{E}^c(P_{\Upsilon_{\mathbb{N}}}(\tilde{s}),X_0)$, even if events in $E_{\mathcal{I}}$ occur after $\tilde{s}$.
	Consequently, the estimate $\mathcal{E}^c(P_{\Upsilon_{\mathbb{N}}}(s),X_0)$ remains valid only immediately following $\tilde{s}$. 
	The coordinator will only be able to update this estimate at the next synchronization. 
	If desirable, the coordinator could presumably attempt to predict system behaviors and potential states between synchronizations based on its knowledge of the synchronization strategy, which could be a valuable topic for future research.
\end{remark}

\subsection{Construction of CSS structure and its Properties}

The concept of \textit{S-builder} was introduced in \cite{SunHadjicostisLi2023} to provide the set of all possible system observation sequences matching a given SI-state $\tau$; state estimation is subsequently performed based on this set.
This method is applicable in online scenarios and requires no preprocessing of the synchronization strategy. 
In other words, the coordinator relies solely on $\tau$, $I(\sigma)$ for all $\sigma\in E_{\mathcal{I}}$, and the state estimates derived from the most recent synchronization.
In the sequel, we detail a preprocessing approach that can be applied to the synchronization strategy, and facilitates state estimation and observer construction, which can be important when trying to verify properties of interest, as discussed later.
This preprocessing approach can also be used for online estimation, e.g., in terms of eliminating the need for (on-line) construction of \textit{S-builders}.

We first introduce the absorbing transition function $\hat{h}$ as a mapping $\hat{h}:{E_1^*\times\dots\times E_m^*}\times E_{\mathcal{I}}\rightarrow E_1^*\times\dots\times E_m^*$, defined as follows:  for any $\tau=(\tau^{(1)},\dots,\tau^{(m)})\in E_1^*\times\dots\times E_m^*$, $\tau'=(\tau'^{(1)},\dots,\tau'^{(m)})\in E_1^*\times\dots\times E_m^*$, and $\sigma\in E_{\mathcal{I}}$, it holds 
\begin{multline*}
	\hat{h}(\tau,\sigma)=\tau'\Rightarrow \tau\models\Upsilon_{\mathbb{N}}\land (\forall i\in I(\sigma),\\
	\forall j\in\mathcal{I}/I(\sigma):(\tau'^{(i)}=\tau^{(i)}\sigma\wedge\tau'^{(j)}=\tau^{(j)})),
\end{multline*}
where the notation ``$\models$'' is explained below.
The function $\hat{h}$ is defined to capture the process by which all OSs record events during the evolution of the system between two consecutive synchronizations. 
In this context, $\tau'$ can be described as the outcome of $\tau$ when $\sigma$ occurs.
Due to the existence of the synchronization strategy $\Upsilon_{\mathbb{N}}$, the first component of the domain of $\hat{h}$ is constrained to the set of SI-states that cannot cause a synchronization, i.e., the set $\{\tau|(\tau\in E_1^*\times\dots\times E_m^*)\land(\forall i\in\mathcal{I}$: $|\tau^{(i)}|< \kappa_i)\}$, which is denoted as $\tau\models\Upsilon_{\mathbb{N}}$.
Given that synchronizations always occur as long as the events in $E_{\mathcal{I}}$ continue to be generated in the system, we refer to SI-states that are immediately followed by a synchronization and are received at the coordinator, as critical SI-states (CSI-states). For any CSI-state $\tau$, $\hat{h}(\tau,\sigma)$ is \textit{not} defined.

We now present the \textit{complete synchronizing sequence structure} (CSS structure), which serves as a fundamental component of this paper.
To clearly define CSS structure,  we first introduce a structure called \textit{synchronizing sequence structure} (SS structure) for a given state $x\in X$.

An SS structure $\mathcal{T}_x$ w.r.t. a state $x\in X$ in system $G$ under DO-based protocol $\Upsilon_{\mathbb{N}}$ is defined as $\mathcal{T}_x=(\mathcal{X}_x,T, h_{a,x},h_{r,x},E_{\mathcal{I}}\cup\{\epsilon\},T_0, \mathcal{X}_{0,x},T_c)$, where
	\begin{itemize}
		\item $\mathcal{X}_{x}\subseteq X\times L$ is the set of system states augmented with layer set $L=\{0,1,\dots,l_u\}\subset\mathbb{N}$;
		\item $T\subseteq E_1^*\times\dots\times E_m^*$ is the set of SI-states during the evolution of the system $G$;
	    \item $h_{a,x}:\mathcal{X}_{x}\times E_{\mathcal{I}}\rightarrow 2^T$ is the transition function, which is defined as follows: for any $(x',l')\in \mathcal{X}_{x}$, $\sigma\in E_{\mathcal{I}}$, and $\tau\in T$, it holds
	\begin{multline*}
		((x',l'),\sigma,\tau)\in h_{a,x}\Leftrightarrow \exists\tau'\in T,\exists \sigma'\in E_{\mathcal{I}}\cup\{\epsilon\}:\\ 
		(\tau',\sigma',(x',l'))\in h_{r,x}\land \hat{h}(\tau',\sigma)=\tau\land\operatorname{R}_{\sigma}(x')\neq\emptyset;
	\end{multline*}
	\item $h_{r,x}:T\times (E_{\mathcal{I}}\cup\{\epsilon\})\rightarrow 2^{\mathcal{X}_{x}}$ is the transition function, which is defined as follows: for any $\tau\in T$, $(x',l')\in \mathcal{X}_{x}$, and $\sigma\in E_{\mathcal{I}}\cup\{\epsilon\}$, it holds
	\begin{multline*}
		(\tau,\sigma,(x',l'))\in h_{r,x}\Leftrightarrow \mathcal{CD}_1\lor\mathcal{CD}_2,\\
		\shoveleft{\text{where}\ \mathcal{CD}_1\Leftrightarrow  \tau=\tau_0\land\sigma=\epsilon\land (x',l')=\mathcal{X}_{0,x}}\\
		\shoveleft{\text{and}\ \mathcal{CD}_2\Leftrightarrow\sigma\in E_{\mathcal{I}}\land(\exists (x'', l'')\in \mathcal{X}_{x}}:\\ 
		((x'', l''),\sigma,\tau)\in h_{a,x}\land x'\in\operatorname{R}_{\sigma}(x'')\land l'=l''+1);
	\end{multline*}
	\item $E_{\mathcal{I}}$ is the set of events recognized by the coordinator;
	\item $T_0=\tau_0\in T$ is the initial SI-state;
	\item $\mathcal{X}_{0,x}=(x,0)$ is the state $x$ with layer ``0'';
	\item $T_c\subseteq T$ is the set of CSI-states.
	\end{itemize}

We next present a simple example to illustrate the construction of SS structure, serving as a guide for the reader through the subsequent discussion and analysis and offering a comprehensive explanation of the structure.
\begin{figure}[htbp]
	\centering
	\includegraphics[scale=0.9]{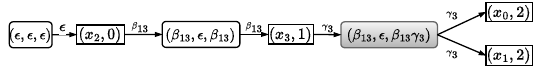}
	\caption{$\mathcal{T}_{x_2}$ of the system $G$ in Example  \ref{example-1}.}
	\label{fig-SS structure-state2}
\end{figure}
\begin{example}\label{example-t_x}
	Fig. \ref{fig-SS structure-state2} shows an SS structure w.r.t. system state ``$x_2$'' in Example  \ref{example-1}.
	The oval states, containing the sets of sequences of events, are SI-states, whereas the square states, containing the pairs of system states and numbers, form the set $\mathcal{X}_{x_2}$.
    Note that $\mathcal{X}_{0,x_2}=(x_2,0)$ and $(\beta_{13},\epsilon,\beta_{13}\gamma_3)$ is the CSI-state.\hfill\rule{1ex}{1ex}
\end{example}

An SS structure $\mathcal{T}_x$ is a multi-bipartite ordered graph that reflects the evolution of system state $x$ through various possible SI-states.
This structure is characterized by two distinct types of alternating layers—system state layers and SI-state layers—where transitions connect the states between these two layers.
System state layers contain system states, i.e., the set $\mathcal{X}_{x}$, where each state is associated with an integer that indicates its layer, distinguishing identical states across different system state layers.
SI-state layers consist of SI-states, i.e., those in set $T$, which naturally distinguish themselves across layers by sequentially absorbing possible events.
For example, in Fig. \ref{fig-SS structure-state2}, the system state layers are $\{(x_2,0)\}$, $\{(x_3,1)\}$, and $\{(x_0,2),(x_1,2)\}$, where the values ``0'', ``1'', ``2'' represent the respective layers.
The SI-states layers are $\{(\epsilon,\epsilon,\epsilon)\}$, $\{(\beta_{13},\epsilon,\beta_{13})\}$, and $\{(\beta_{13},\epsilon,\beta_{13}\gamma_3)\}$.



Functions $h_{a,x}$ and $h_{r,x}$ facilitate the transitions from system states to SI-states and from SI-states to system states, respectively.
Specifically, the SI-states, as mapped by $h_{a,x}$ from different system states augmented with the same layer, are grouped into the same SI-state layer.
Conversely, $h_{r,x}$ maps an SI-state to a subset of system states, each augmented with the same layer. 
Together, these two functions form the CSS structure structure into consecutive alternating layers, where system states and SI-states are sequentially connected. 
In detail,

1) In the definition of $h_{a,x}$: 
	$\tau'$ represents an SI-state that transitions (without requirement for a specific $\sigma'$, suggesting that any $\sigma'\in E_{\mathcal{I}}$ is acceptable) to $(x',l')$, i.e., $(\tau',\sigma',(x',l'))\in h_{r,x}$. 
	Symbol $\sigma$ is the observable event that might occur from system state $x'$.
	With these elements, a mapping from $(x',l')$ to $\tau$ is defined such that $\tau$ is the outcome of $\tau'$ when absorbing $\sigma$.
	This process signifies that the resulting SI-state, $\tau$, is determined based on the current system state being $x'$ and the occurrence of event $\sigma$, with the preceding SI-state being $\tau'$.
	For the example in Fig.~\ref{fig-SS structure-state2}, note that (i) there exists a transition relation $(\tau_0,\epsilon, (x_2,0))\in h_{r,x}$, since $\operatorname{R}_{\beta_{13}}(x_2)\neq\emptyset$ (which indicates that event $\beta_{13}$ can be the observable event occurring from $x_2$); (ii) $\hat{h}(\tau_0,\beta_{13})=(\beta_{13},\epsilon,\beta_{13})$. 
	Using (i) and (ii), we know $((x_2,0),\beta_{13},(\beta_{13},\epsilon,\beta_{13}))\in h_{a,x}$.

2) In the definition of $h_{r}$: 
    Let us first focus on Condition $\mathcal{CD}_2$ (Condition $\mathcal{CD}_1$ will be discussed later): 
	$((x'',l''),\sigma,\tau)\in h_{a,x}$ indicates that $(x'',l'')$ in the preceding system state layer (where $l'=l''+1$) transitions to SI-state $\tau$ with observable event $\sigma$.
	The set $\operatorname{R}_{\sigma}(x'')$ represents the state estimates following the occurrence of event $\sigma$ when the system state was $x''$. 
	With these elements, a mapping from $\tau$ to $(x',l')$ is defined if and only if the system could evolve from $x''$ to $x'$ upon the occurrence of observable event $\sigma$ and the SI-state $\tau$ is the outcome of absorbing $\sigma$.
	For the example in Fig.~\ref{fig-SS structure-state2}, we know that (i) there exists transition relation $((x_2,0),\beta_{13},(\beta_{13},\epsilon,\beta_{13}))\in h_{a,x}$, since $\operatorname{R}_{\beta_{13}}(x_2)=\{x_3\}$; (ii) $1=0+1$.
	Using (i) and (ii), we deduce that $((\beta_{13},\epsilon,\beta_{13}),\beta_{13}, (x_3,1))\in h_{r,x}$.

The condition $\mathcal{CD}_1$ within the definition of $h_{r,x}$ serves as the initial mapping from initial SI-state $\tau_0$ to $\mathcal{X}_{0,x}=(x,0)$, i.e., $\{\mathcal{X}_{0,x}\}=h_{r,x}(\tau_0,\epsilon)$. 
This setup is the foundation from which the entire SS structure is constructed.
Then, the SS structure is developed through the interdependence of the two functions, $h_{a,x}$ and $h_{r,x}$, where the output of one function forms the input of the other. 
Furthermore, the SI-state continuously absorbs events until it transitions into a CSI-state. 
This mechanism ensures that the size of the SS structure is finite, as it is constrained by the predicate $\hat{h}(\tau',\sigma)=\tau$ specified in the definition of $h_{a,x}$.
For example, in Fig.~\ref{fig-SS structure-state2}, we initiate a mapping as $(\tau_0,\epsilon, (x_2,0))$ in $h_{r,x}$. 
The rest of $\mathcal{T}_{x_2}$ is constructed using the function described above.
In the end, the SI-state layer terminates in the CSI-state $\{(\beta_{13},\epsilon, \beta_{13}\gamma_3)\}$ and the system state layer terminates in the set of states $\{(x,2), (x_1,2)\}$.

The above concept allows the systematic analysis of the evolution of system state $x$ upon any possible SI-state.
We are now ready to define the CSS structure.
For the sake of notational simplicity, we will continue to use the notations from the SS structure when defining the CSS structure.\footnote{Henceforth, unless otherwise specified, all related notations will refer to the CSS structure.}

\begin{definition}\label{def-CSS structure}
	A \textit{Complete Synchronizing Sequence structure} (CSS structure) $\mathcal{T}$ w.r.t.  system $G$ under DO-based protocol $\Upsilon_{\mathbb{N}}$ is an eight-tuple structure 
	\begin{align*}
		\mathcal{T}=(\mathcal{X},T, h_a,h_r,E_{\mathcal{I}}\cup\{\epsilon\},T_0, \mathcal{X}_0,T_c)
	\end{align*}
where
\begin{itemize}
	\item $T$, $E_{\mathcal{I}}$, $T_0$, and $T_c$ are the same as those in SS structure;
	\item $\mathcal{X}\subseteq X\times X\times L$ is the set of state pairs augmented with layer set $L=\{0,1,\dots,l_u\}\subset\mathbb{N}$ and $(\rho_0,\rho_1)\in X\times X$ and $\rho_d\in L$ denote the state pair component and layer component of a state $\rho=(\rho_0,\rho_1,\rho_d)\in\mathcal{X}$;
	\item $h_a:\mathcal{X}\times E_{\mathcal{I}}\rightarrow 2^T$ is the transition function, which is defined as follows: for any $\rho=(\rho_0,\rho_1, \rho_d)\in \mathcal{X}$, $\sigma\in E_{\mathcal{I}}$, and $\tau\in T$, it holds
	\begin{multline*}
		(\rho,\sigma,\tau)\in h_{a}\Leftrightarrow \exists\tau'\in T,\exists \sigma'\in E_{\mathcal{I}}\cup\{\epsilon\}:\\ 
		(\tau',\sigma',\rho)\in h_r\land \hat{h}(\tau',\sigma)=\tau\land\operatorname{R}_{\sigma}(\rho_1)\neq\emptyset;
	\end{multline*}
    \item $h_r:T\times (E_{\mathcal{I}}\cup\{\epsilon\})\rightarrow 2^{\mathcal{X}}$ is the transition function, which is defined as follows: for any $\tau\in T$, $\rho=(\rho_0,\rho_1, \rho_d)\in \mathcal{X}$, and $\sigma\in E_{\mathcal{I}}\cup\{\epsilon\}$, it holds
    \begin{multline*}
    	(\tau,\sigma,\rho)\in h_r\Leftrightarrow \mathcal{CD}_1\lor\mathcal{CD}_2,\\
    	\shoveleft{\text{where}\ \mathcal{CD}_1\Leftrightarrow \sigma=\epsilon\land \tau=\tau_0\land \rho\in\mathcal{X}_0}\\
    	\shoveleft{\text{and}\ \mathcal{CD}_2\Leftrightarrow\sigma\in E_{\mathcal{I}}\land(\exists \rho'=(\rho_0,\rho_1', \rho_d')\in \mathcal{X}:}\\ 
    	(\rho',\sigma,\tau)\in h_a\land \rho_1\in\operatorname{R}_{\sigma}(\rho_1')\land \rho_d=\rho_d'+1);
    \end{multline*}
\item $\mathcal{X}_0=\{(x,x,0)|x\in X\}$ is the set of initial state pairs with layer ``0''.
\end{itemize}
\end{definition}

Compared with the SS structure, the CSS structure comprises \textit{all} system states in a structure and exhibits the following differences.
The set $\mathcal{X}$ is now defined as the set of state pairs augmented with a layer set.
This modification is necessary to retain the corresponding initial system states at each system state layer.
Certain details in the functions $h_a$ and $h_r$ have been adjusted accordingly, although the main idea remains the same.
The primary effect of the changes to the functions is that, apart from $\mathcal{X}_0$, which is initialized first, the second component of each $\rho\in\mathcal{X}$ indicates whether it can be reached from the first component via specific system behaviors, with the corresponding SI-states connecting to the state $\rho$.
Consequently, the first element of $\rho$ in the definition of $h_r$ remains unchanged.
Overall, a CSS structure is a multi-bipartite ordered graph that reflects the evolution of all system states upon any possible SI-state. 

\begin{figure*}[htbp]
	\centering
	\includegraphics[scale=0.8]{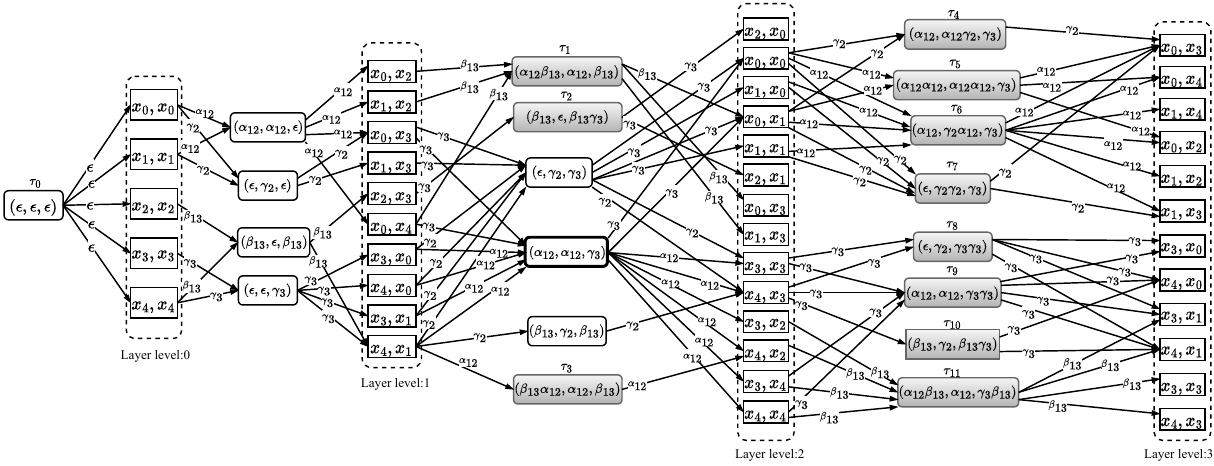}
	\caption{The CSS structure of system $G$ under DO-based protocol $\Upsilon_{\mathbb{N}}$. 
		For clarity, the sets of state pairs that share the same layer are enclosed in a dotted rectangle, with the corresponding layer listed below each rectangle. The SI-states with grey shading indicate that they are CSI-states of $T_c$, labeled from $\tau_1$ to $\tau_{11}$.}
	\label{fig-CSS-structure-1}
\end{figure*}

\begin{example}\label{example-2}
Consider again the system $G$ and the DO-based protocol $\Upsilon_{\mathbb{N}}$ shown in Fig.~\ref{fig-example-1}. 
The corresponding CSS structure is illustrated in Fig.~\ref{fig-CSS-structure-1}.
It is clear that by adding the component ``$x_0$'' to the first position of each state in $\mathcal{X}$ of $\mathcal{T}_{x_2}$ in Fig. \ref{example-t_x}, $\mathcal{T}_{x_2}$ becomes part of this structure.

From the initial SI-state $\tau_0=(\epsilon,\epsilon,\epsilon)$, based on the condition $\mathcal{CD}_1$ in Definition~\ref{def-CSS structure}, there are transition relations in $h_r$ that connect $\tau_0$ to each state pair of $\mathcal{X}_0$ via $\epsilon$, all labeled with layer 0, i.e., $(\tau_0,\epsilon, (x,x,0))\in h_r$ for all $x\in\{x_0,x_1,x_2,x_3,x_4\}$.
Similar to the way that $\mathcal{T}_{x_2}$ in Example~\ref{example-t_x} is obtained, given a state pair $\rho=(x_0,x_0,0)$, by
 $\operatorname{R}_{\alpha_{12}}(x_0)\neq\emptyset$ and $(\tau_0,\epsilon, \allowbreak\rho)\in h_r$, 
the event $\alpha_{12}$ can be absorbed by $\tau_0$ 
such that $\hat{h}(\tau_0,\allowbreak\alpha_{12})=(\alpha_{12},\alpha_{12},\epsilon)$, which leads to $(\rho,\alpha_{12},(\alpha_{12},\alpha_{12},\epsilon))\in h_a$.
As in the case of condition $\mathcal{CD}_2$ in Definition~\ref{def-CSS structure}, due to $(\rho,\alpha_{12}, (\alpha_{12},\alpha_{12},\epsilon))\in h_a$ and $\operatorname{R}_{\alpha_{12}}(x_0)\allowbreak=\{x_2,x_3,x_4\}$, the following transition relation in $h_r$ holds: $\{(x_0,x_2,1),(x_0,x_3,1),(x_0,x_4,1)\}\in h_r((\alpha_{12},\alpha_{12},\epsilon),\alpha_{12})$.
Furthermore, given that $((x_1,x_1),\alpha_{12},\allowbreak(\alpha_{12},\alpha_{12},\epsilon))\in h_a$ and $\operatorname{R}_{\alpha_{12}}(x_1)=\{x_2\}$, we also have 
$\{(x_1,x_2,1)\}\in h_r((\alpha_{12},\alpha_{12},\allowbreak\epsilon),\alpha_{12})$.
Combining these relations, we find $h_r((\alpha_{12},\alpha_{12},\epsilon),\allowbreak\alpha_{12})=\{(x_0,x_2,1),(x_0,x_3,1), (x_0,x_4,1), (x_1,x_2,1)\}$.

Let us now focus on the state pair $(x_0,x_2,1)$.
It holds that $((x_0,x_2,1), \beta_{13},\tau_1)\in h_a$ due to $((\alpha_{12},\alpha_{12},\epsilon),\alpha_{12},\allowbreak(x_0,x_2,1))\in h_r$, $\operatorname{R}_{\beta_{13}}(x_2)\neq\emptyset$, and 
$\hat{h}((\alpha_{12},\alpha_{12},\epsilon),\beta_{13})=\tau_1$.
Additionally, $(\tau_1,\beta_{13},(x_0,x_3,2))\in h_r$ because of $((x_0,x_2,1),\beta_{13},\tau_1)\in h_a$ and  $\operatorname{R}_{\beta_{13}}(x_2)=\{x_3\}$.
The rest of CSS structure can be constructed in a similar way. 
Note that, for any state pair transitioned to exclusively by CSI-states, such as $(x_1,x_3,2)$, there is no subsequent transition relation in $h_a$, based on the definition of $\hat{h}$.\hfill\rule{1ex}{1ex}
\end{example}

Since the coordinator knows the details of the synchronization strategy, it inherently knows which OSs signal the synchronization.
According to the description of the DO-based protocol, each synchronization step is triggered by the OS (or OSs) that records the last event of a system execution.
Then, a CSS structure satisfies the following property.

\begin{lemma}\label{Lemma-CSS structure-lastEvent}
Let $\mathcal{T}=(\mathcal{X},T, h_a,h_r,E_{\mathcal{I}}\cup\{\epsilon\},T_0, \mathcal{X}_0,T_c)$ be a CSS structure. 
Then, it holds
		\begin{align*}
			\forall \tau\in T_c, \forall (\tau,\sigma,\rho),(\tau,\sigma',\rho')\in h_r: \sigma=\sigma'.
		\end{align*}
\end{lemma}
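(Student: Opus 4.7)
The plan is to take an arbitrary $\tau \in T_c$ together with any two transitions $(\tau,\sigma,\rho), (\tau,\sigma',\rho') \in h_r$, and to argue that both labels must coincide with the last symbol of some ``full'' component of $\tau$, hence are equal. First I dispose of the case where either transition comes from condition $\mathcal{CD}_1$ of the definition of $h_r$: this forces $\tau = \tau_0$ and $\sigma = \epsilon$. It then suffices to rule out a coexisting $\mathcal{CD}_2$-transition out of $\tau_0$. Such a transition would require an $h_a$-incoming edge into $\tau_0$, which in turn would require some $\tau'\models\Upsilon_{\mathbb{N}}$ and $\sigma''\in E_{\mathcal{I}}$ with $\hat h(\tau',\sigma'') = \tau_0$; but $\hat h$ strictly extends the components indexed by $I(\sigma'')\neq\emptyset$, contradicting $\tau_0=(\epsilon,\dots,\epsilon)$. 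So every transition out of $\tau_0$ in $h_r$ is labelled $\epsilon$, and the claim holds for $\tau = \tau_0$.

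For the main case, both transitions arise from $\mathcal{CD}_2$. Unpacking $h_r$ into $h_a$, and $h_a$ into $\hat h$, I obtain SI-states $\tau_1', \tau_2' \in T$ with $\tau_1',\tau_2'\models\Upsilon_{\mathbb{N}}$, together with $\hat h(\tau_1', \sigma) = \tau$ and $\hat h(\tau_2', \sigma') = \tau$, where $\sigma, \sigma' \in E_{\mathcal{I}}$. Because $\tau \in T_c$, the synchronization strategy $\Upsilon_{\mathbb{N}}$ guarantees an index $k \in \mathcal{I}$ with $|\tau^{(k)}| = \kappa_k$. Since $\tau_1'\models\Upsilon_{\mathbb{N}}$ we have $|\tau_1'^{(k)}| < \kappa_k = |\tau^{(k)}|$, so $\tau_1'^{(k)} \neq \tau^{(k)}$. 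Reading off the definition of $\hat h$, this inequality is possible only when $k \in I(\sigma)$, and in that case $\tau^{(k)} = \tau_1'^{(k)}\,\sigma$; hence $\sigma$ is the last symbol of $\tau^{(k)}$. The identical argument applied to $\tau_2'$ and $\sigma'$ forces $\sigma'$ to equal the last symbol of $\tau^{(k)}$ as well, which gives $\sigma = \sigma'$.

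The only point that needs care is the two-step unpacking $h_r \to h_a \to \hat h$ to expose the predecessor SI-state $\tau'$; once it is on the table, the argument is a one-line reading of the definition of $\hat h$ applied at a coordinate that is at capacity in $\tau$ but not in $\tau'$. No induction, no additional structural facts about the CSS structure, and no case split beyond $\mathcal{CD}_1$ vs.\ $\mathcal{CD}_2$ are required.
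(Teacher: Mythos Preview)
Your proof is correct, and it reaches the conclusion by a somewhat different route than the paper. Both arguments begin identically, unwinding $h_r$ and $h_a$ to produce predecessor SI-states $\tau_1',\tau_2'\in T$ with $\tau_1',\tau_2'\models\Upsilon_{\mathbb N}$ and $\hat h(\tau_1',\sigma)=\hat h(\tau_2',\sigma')=\tau$. From there the paper argues by contradiction and splits into the cases $I(\sigma)=I(\sigma')$ and $I(\sigma)\neq I(\sigma')$: in the first it reads off $\sigma=\sigma'$ from any common coordinate; in the second it picks $j\in I(\sigma)\setminus I(\sigma')$, invokes the principle that the synchronization must be triggered by an OS that records the last event to force $|\tau^{(j)}|=\kappa_j$, and then derives $|\tau_2'^{(j)}|=\kappa_j$, contradicting $\tau_2'\models\Upsilon_{\mathbb N}$. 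Your argument is direct and avoids this case split entirely: you fix in advance an index $k$ where $\tau$ is at capacity (which exists because $\tau\in T_c$), note that the $k$-th component must strictly grow under $\hat h$ from either predecessor, and conclude that both $\sigma$ and $\sigma'$ coincide with the last symbol of $\tau^{(k)}$. This is shorter and makes the role of the critical coordinate explicit from the outset; you also treat the $\mathcal{CD}_1$ boundary case (transitions out of $\tau_0$) carefully, which the paper leaves implicit.
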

\begin{proof}
	We prove this by contradiction.
	Assume there exist $\tau\in T_c$ and $(\tau,\sigma,\rho),(\tau,\sigma',\rho')\in h_r$, such that $\sigma\neq\sigma'$.
	Then, there exist two SI-states $\tau_1,\tau_2\in T$, such that $\hat{h}(\tau_1,\sigma)=\tau$ and $\hat{h}(\tau_2,\sigma')=\tau$.
	If $I(\sigma)=I(\sigma')$, then for any $i\in I(\sigma)$, it holds $\tau_1^{(i)}\sigma=\tau^{(i)}=\tau_2^{(i)}\sigma'$ which contradicts the assumption that $\sigma\neq\sigma'$.
	If $I(\sigma)\neq I(\sigma')$, then either $I(\sigma)\backslash I(\sigma')\neq \emptyset$ or $I(\sigma')\backslash I(\sigma)\neq \emptyset$ holds.
	Let us assume $I(\sigma)\backslash I(\sigma')\neq \emptyset$.
	For any $j\in I(\sigma)\setminus I(\sigma')$, we have $\tau^{(j)}_1\sigma=\tau^{(j)}=\tau^{(j)}_2$, which indicates $|\tau^{(j)}_1\sigma|=|\tau^{(j)}_2|$.
    Since the synchronization is triggered by the OS that records the last event, i.e., event $\sigma$ or $\sigma'$, then $|\tau^{(j)}_1\sigma|=\kappa_j=|\tau^{(j)}_2|$, which contradicts the fact that $\tau_2\models\Upsilon_{\mathbb{N}}$, concluding that $\sigma=\sigma'$.
\end{proof}

Lemma~\ref{Lemma-CSS structure-lastEvent} indicates that the transitions originating from a CSI-state in a CSS structure share the same event label.
Let $M(\tau)$ denote the set of state pairs which are transitioned to by $\tau$ in a CSS structure, i.e., for any event $\sigma\in E_{\mathcal{I}}$ and for any state $\rho$ in $\mathcal{X}$, we have $M(\tau)=\{(\rho_0,\rho_1)|(\tau,\sigma,\rho)\in h_r\}$.

As seen earlier, the CSS structure is used to interpret the given DO-based protocol, allowing us to explore the state evolution of each state upon any SI-state. Therefore, given a system, the state evolution is decoded in its corresponding CSS structure, as concluded below.

\begin{proposition}\label{pro-current-string}
	Let $\mathcal{T}=(\mathcal{X},T, h_a,h_r,E_{\mathcal{I}}\cup\{\epsilon\},T_0, \mathcal{X}_0,T_c)$ be the CSS structure w.r.t. system $G$ under DO-based protocol $\Upsilon_{\mathbb{N}}$. The following hold:
	\begin{enumerate}
		\item $\forall \tau\in T\setminus\{\tau_0\},\forall x,x'\in X:((x,x')\in M(\tau)\Leftrightarrow (\exists t\in L(G,x):x'\in\delta(x,t)\land (\forall i\in\mathcal{I}:P_i(t)=\tau^{(i)})))$;
		\item $\forall s\in L(G):SI(s)\in T \land SI(\tilde{s})\in T_c\cup\{\tau_0\}$. 
	\end{enumerate}
\end{proposition}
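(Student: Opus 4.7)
The plan is to establish item (1) first by induction on the layered structure of $\mathcal{T}$, and then derive item (2) from it together with a short case analysis of the synchronization condition under $\Upsilon_{\mathbb{N}}$. The key observation underlying both inductions is that the layer $\rho_d$ of a state pair $\rho = (\rho_0, \rho_1, \rho_d) \in \mathcal{X}$ equals the number of events from $E_{\mathcal{I}}$ that have been absorbed into the SI-state reached at $\rho$ along any incoming path from $\mathcal{X}_0$; each application of $h_a$ followed by $h_r$ absorbs exactly one observable event.

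For the forward direction of item (1), suppose $(x, x') \in M(\tau)$, so there exist $\sigma \in E_{\mathcal{I}}$ and $\rho_d \geq 1$ with $(\tau, \sigma, (x, x', \rho_d)) \in h_r$ via condition $\mathcal{CD}_2$, and I would induct on $\rho_d$. In the base case $\rho_d = 1$, condition $\mathcal{CD}_2$ backtracks to a predecessor state pair in $\mathcal{X}_0$, which must be $(x, x, 0)$; the definition of $h_a$ combined with $\mathcal{CD}_1$ (the only route to layer $0$) yields $\hat{h}(\tau_0, \sigma) = \tau$ and $x' \in \operatorname{R}_{\sigma}(x)$. Unpacking $\operatorname{R}_{\sigma}$ produces $u \in E^*$ with $x' \in \delta(x, u)$ and $P_{\mathcal{I}}(u) = \sigma$; writing $u = u_0 \sigma u_1$ with $u_0, u_1$ unobservable and comparing $P_i(u)$ against the definition of $\hat{h}$ gives $P_i(u) = \tau^{(i)}$ for all $i$, so $t := u$ works. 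The inductive step backs up one layer via $\mathcal{CD}_2$ to a predecessor $(x, x'', \rho_d - 1)$ with incoming SI-state $\tau'$ satisfying $\hat{h}(\tau', \sigma) = \tau$, applies the hypothesis to obtain $t'$ with $x'' \in \delta(x, t')$ and $P_i(t') = \tau'^{(i)}$, and appends a witness $u$ for $x' \in \operatorname{R}_{\sigma}(x'')$ to form $t = t' u$ with the required projection property.

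For the backward direction, I would induct on the number $k$ of events from $E_{\mathcal{I}}$ in $t$; since $\tau \neq \tau_0$, one has $k \geq 1$. The base $k = 1$ writes $t = u_0 \sigma u_1$ with $u_0, u_1$ unobservable, infers $\hat{h}(\tau_0, \sigma) = \tau$ from $P_i(t) = \tau^{(i)}$, and chains $\mathcal{CD}_1$ through the definition of $h_a$ and then $\mathcal{CD}_2$ to produce $(\tau, \sigma, (x, x', 1)) \in h_r$, so $(x, x') \in M(\tau)$. The inductive step decomposes $t = t' \sigma u$ where $\sigma$ is the last observable event and $u$ is unobservable, picks $x'' \in \delta(x, t')$ with $x' \in \operatorname{R}_{\sigma}(x'')$, applies the hypothesis to $t'$ (the associated $\tau'$ is non-critical, so $\hat{h}(\tau', \sigma)$ is defined and equals $\tau$), and extends by one layer via $h_a$ and $\mathcal{CD}_2$. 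Crucially, this direction constructively exhibits $\tau \in T$, which will be reused for item (2).

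Item (2) then follows quickly. For $SI(s) \in T$: if the segment $s/\tilde{s}$ since the last synchronization contains no event from $E_{\mathcal{I}}$, then $SI(s) = \tau_0 \in T$; otherwise the constructive proof of the backward direction of item (1), applied to $s/\tilde{s}$ starting from a state reachable at the last synchronization, yields $SI(s) \in T$. For $SI(\tilde{s}) \in T_c \cup \{\tau_0\}$: the case $\tilde{s} = \epsilon$ gives $SI(\tilde{s}) = \tau_0$, while $\tilde{s} \neq \epsilon$ forces $f_{\mathcal{I}}(\tilde{s}) = 1$ by definition of $\tilde{s}$, and under $\Upsilon_{\mathbb{N}}$ this forces $|\mathcal{P}_i(\tilde{s})| = \kappa_i$ for some $i$, i.e., $SI(\tilde{s}) \not\models \Upsilon_{\mathbb{N}}$; combining with $SI(\tilde{s}) \in T$ places it in $T_c$. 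The main technical obstacle is aligning the event-by-event absorption described by $\hat{h}$ with the interleaving of unobservable events inside $t$, and ensuring each intermediate SI-state $\tau'$ produced in the recursion satisfies $\tau' \models \Upsilon_{\mathbb{N}}$ so that $\hat{h}(\tau', \sigma)$ is defined; the latter holds because such $\tau'$ arises strictly before any synchronization-triggering step within a single cycle.
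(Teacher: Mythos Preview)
Your plan matches the paper's: both directions of item (1) trace paths through the layered CSS structure (you make the inductions on $\rho_d$ and on the number of observable events explicit where the paper simply asserts the existence of the chain $\sigma_1\cdots\sigma_n$, $\tau_0\cdots\tau_n$, $x_0\cdots x_n$ and says the converse ``can be proven similarly''), and item (2) is obtained by applying the backward construction to the segment $s/\tilde s$ together with the criticality criterion for $T_c$, exactly as the paper does.

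You also isolate a point the paper's sketch does not mention: the inductive step of the backward direction needs the intermediate $\tau'$ to satisfy $\tau'\models\Upsilon_{\mathbb N}$ so that $\hat h(\tau',\sigma)$ is defined. Your justification---that ``such $\tau'$ arises strictly before any synchronization-triggering step within a single cycle''---is precisely what is required when $t$ lies between two consecutive synchronizations, and this covers item (2) and every subsequent use of the proposition in the paper. Be aware, however, that item (1) as literally stated does not restrict $t$ in this way: if the observable sequence of some $t\in L(G,x)$ passes through a critical SI-state before reaching $\tau$, while $\tau\in T$ only because a \emph{different} observable interleaving (possibly from another starting state) avoids that obstruction, then the right-hand side of the biconditional can hold without $(x,x')\in M(\tau)$. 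Neither your argument nor the paper's brief sketch handles this edge case; it is best read as a mild over-statement in the wording of the backward implication rather than a defect in your method.
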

\begin{proof}
	1) ($\Rightarrow$)
	If $(x,x')\in M(\tau)$, according to the definition of CSS structure, there exist a sequence of events $\omega=\sigma_1\sigma_2\dots\sigma_n$, a sequence of SI-states $\tau_0\tau_1\dots\tau_n$, and a sequence of states $x_0x_1x_2\dots x_n$, such that $\tau_0=(\epsilon,\dots,\epsilon)$, $\tau_n=\tau$, $x_0=x$, $x_n=x'$ and for any $j\in\{1,\dots,n\}$, we have $((x,x_{j-1},j-1),\sigma_j,\tau_j)\in h_a)$ and $(\tau_j,\sigma_j,(x,x_j,j))\in h_r$ where $\tau_j=\hat{h}(\tau_{j-1},\sigma_j)$ and $x_j\in\operatorname{R}_{\sigma_j}(x_{j-1})$.
	Since $\tau\in T\setminus\{\tau_0\}$, it follows that $n>0$ in this context.
	Then, there exists $t\in L(G,x)$ such that for all $i\in\mathcal{I}$, $P_i(t)=\tau^{(i)}$. 
	In accordance with the definition of the operator $\operatorname{R}(\cdot)$, it follows that $x'\in\delta(x,t)$ which appropriately signifies the ``$\Rightarrow$'' direction.
	
	The direction of ``$\Leftarrow$'' can be proven similarly based on the definition of CSS structure, which completes the proof of 1) in this proposition.
	
	2) To prove the truth of the predicate ``$\forall s\in L(G):SI(s)\in T$'', we proceed as follows.
	Given a string $s\in L(G)$, we know $SI(s)=(\mathcal{P}_1(s),\dots,\mathcal{P}_m(s))=(P_1(s\backslash\tilde{s}),\dots,P_m(s\backslash\tilde{s}))$.
	Then, there exist $x,x'\in X$ such that $x\in \delta(X_0,\tilde{s})$ and $x'\in\delta(x,s\backslash\tilde{s})$.
	By $(x,x,0)\in\mathcal{X}_0$, 
	similarly to the proof of 1) in this proposition, there exists an SI-state $\tau\in T$ such that for any $i\in\mathcal{I}$, $P_i(s\backslash\tilde{s})=\tau^{(i)}$.
    
    If there is no synchronization within the sequence $s$, it follows that $SI(\tilde{s})=\tau_0$;
    otherwise, $\tilde{s}\neq\epsilon$ and $SI(\tilde{s})\in T$.
    Since the string $\tilde{s}$ is immediately followed by a synchronization, $SI(\tilde{s})$ is a CSI-state, i.e., there does not exist an event  $\sigma\in E_{\mathcal{I}}$ such that $\hat{h}(SI(\tilde{s}),\sigma)$ is defined; we conclude that $SI(\tilde{s})\in T_c$.
\end{proof}

Proposition \ref{pro-current-string}.2) indicates that the set $T$ in the system's CSS structure $\mathcal{T}$ contains all possible SI-states, including CSI-states, during the system evolution.
Combined with the conclusion in Proposition \ref{pro-current-string}.1), the CSS structure can be used to perform DO-based current-state estimation at one synchronization step, based on the assumption that the system is known to be in a subset of states $X'$.
This process is described in the following result.

\begin{proposition}\label{pro-current-entend}
Let $\mathcal{T}=(\mathcal{X},T, h_a,h_r,E_{\mathcal{I}}\cup\{\epsilon\},T_0, \mathcal{X}_0,T_c)$ be the CSS structure of a system $G$ under DO-based protocol $\Upsilon_{\mathbb{N}}$. 
Given $s\in L(G,X')$ such that $P_{\Upsilon_{\mathbb{N}}}(s)=\tau$ where $\tau\in T_c$, i.e., only one synchronization happens, it holds
\begin{align*}
\mathcal{E}^c(\tau,X')=\{x\in X|\exists x'\in X', (x',x)\in M(\tau)\}
\end{align*}
\end{proposition}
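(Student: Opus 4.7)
The plan is to establish the set equality by proving both inclusions, leveraging Proposition~\ref{pro-current-string}.1) to translate between system behaviors and the combinatorial object $M(\tau)$. Two preliminary observations I would use throughout: (i) since $\tau \in T_c$ we have $\tau \neq \tau_0$, by Proposition~\ref{pro-current-string}.2); and (ii) the hypothesis that only a single synchronization occurs in $s$ (with $P_{\Upsilon_{\mathbb{N}}}(s) = \tau$) means that for any witness $u$ satisfying $P_{\Upsilon_{\mathbb{N}}}(u) = \tau$, the prefix $\tilde{u}$ carries the SI-state $\tau$, so $P_i(\tilde{u}) = \tau^{(i)}$ for every $i \in \mathcal{I}$.

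For the forward inclusion, I would pick $x \in \mathcal{E}^c(\tau, X')$ with witnesses $x' \in X'$ and $u \in L(G, x')$, and unpack $x \in \operatorname{UR}(\delta(x', \tilde{u}))$ into a string $v$ with $P_{\mathcal{I}}(v) = \epsilon$ and $x \in \delta(x', \tilde{u} v)$. The concatenation $t := \tilde{u} v$ lies in $L(G, x')$; since no event of $v$ is in $E_{\mathcal{I}} \supseteq E_i$, we get $P_i(t) = P_i(\tilde{u}) = \tau^{(i)}$ for every $i$. The ``$\Leftarrow$'' direction of Proposition~\ref{pro-current-string}.1) then delivers $(x', x) \in M(\tau)$.

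For the reverse inclusion I would start from $(x', x) \in M(\tau)$ with $x' \in X'$ and apply the ``$\Rightarrow$'' direction of Proposition~\ref{pro-current-string}.1) to obtain $t \in L(G, x')$ with $x \in \delta(x', t)$ and $P_i(t) = \tau^{(i)}$ for every $i$. Decomposing $t = t_1 t_2$ so that $t_1$ ends with the last occurrence in $t$ of an event from $E_{\mathcal{I}}$ and $t_2 \in (E \setminus E_{\mathcal{I}})^*$, I would take $u := t$ and argue that $\tilde{u} = t_1$; then $x \in \delta(x', t_1 t_2) \subseteq \operatorname{UR}(\delta(x', t_1))$, so $x \in \mathcal{E}^c(\tau, X')$.

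The main obstacle is the identification $\tilde{u} = t_1$. Firing \emph{at} $t_1$ is easy: $\tau \in T_c$ forces $|\tau^{(i)}| = \kappa_i$ for some $i$, so $\mathcal{C}_i(\mathcal{P}_i(t_1)) = 1$ and hence $f_{\mathcal{I}}(t_1) = 1$. No synchronization fires \emph{after} $t_1$ either, because every event of $t_2$ lies outside $E_{\mathcal{I}}$, so the reset counters $|\mathcal{P}_i(t_1 v')|$ remain zero for each prefix $v'$ of $t_2$. The subtler point is that no \emph{proper} prefix of $t_1$ triggers a synchronization: here I would exploit that $\hat{h}$ is defined only on SI-states satisfying $\tau \models \Upsilon_{\mathbb{N}}$, so the successive SI-states $\tau_0, \hat{h}(\tau_0, \sigma_1), \hat{h}(\hat{h}(\tau_0, \sigma_1), \sigma_2), \ldots$ generated by the events $\sigma_1, \sigma_2, \ldots$ of $t_1$ that lie in $E_{\mathcal{I}}$ all sit in $T \setminus T_c$ (with strict inequalities $|\tau'^{(i)}| < \kappa_i$) until the very last step, which produces $\tau$ itself.
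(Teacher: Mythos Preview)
Your two-inclusion argument via Proposition~\ref{pro-current-string}.1) is exactly the route the paper's one-line proof gestures at, and the forward inclusion is fine. The reverse inclusion, however, has a real gap at the ``subtler point.'' You obtain $t$ from the bare existential conclusion of Proposition~\ref{pro-current-string}.1) and then assert that iterating $\hat h$ along the observable events $\sigma_1,\sigma_2,\ldots$ of $t_1$ stays in $T\setminus T_c$ until the last step produces $\tau$. That assertion is not a consequence of $P_i(t)=\tau^{(i)}$ alone: with two sites, $E_1=\{a\}$, $E_2=\{b\}$, $\kappa_1=\kappa_2=2$, and $\tau=(aa,b)\in T_c$, the string $t=aab$ satisfies $P_i(t)=\tau^{(i)}$ for both $i$, yet the $\hat h$-iteration hits the CSI-state $(aa,\epsilon)$ after two steps and is undefined thereafter, and the run of $t$ synchronizes at $aa$, giving $P_{\Upsilon_{\mathbb{N}}}(t)=(aa,\epsilon)\neq\tau$. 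So your chain of $\hat h$-applications presupposes exactly what it is meant to establish.

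The repair is to bypass the existential and return to the definition of $M(\tau)$: membership $(x',x)\in M(\tau)$ hands you an explicit path in the CSS structure (this is also how the proof of Proposition~\ref{pro-current-string}.1) manufactures its witness), and the string $t$ read off that path has $P_{\mathcal{I}}(t)$ equal to the path's event labels, so the $\hat h$-iteration is well-defined \emph{by construction} and your identification $\tilde u=t_1$ then goes through. A minor aside: your preliminary observation (i) that $\tau\neq\tau_0$ does not follow from Proposition~\ref{pro-current-string}.2), which only places $SI(\tilde s)$ in $T_c\cup\{\tau_0\}$ without separating the two; the actual reason is that $\kappa_i\ge 1$ for all $i$, which forces $|\tau_0^{(i)}|=0<\kappa_i$ and hence $\tau_0\notin T_c$.
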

\begin{proof}
The proof follows directly from Lemma \ref{Lemma-CSS structure-lastEvent} and Proposition \ref{pro-current-string}.
\end{proof}

The following proposition presents the method to perform DO-based initial-state estimation at one synchronization step.

\begin{proposition}\label{pro-ini}
Let $\mathcal{T}=(\mathcal{X},T, h_a,h_r,E_{\mathcal{I}}\cup\{\epsilon\},T_0, \mathcal{X}_0,T_c)$ be a CSS structure of a system $G$ under DO-based protocol $\Upsilon_{\mathbb{N}}$. 
Given $s\in L(G,X_0)$ such that $P_{\Upsilon_{\mathbb{N}}}(s)=\tau$ where $\tau\in T_c$, i.e., only one synchronization happens, it holds
\begin{align*}
	     \mathcal{E}^i(\tau,X_0)=\{x\in X_0|\exists x'\in X, (x,x')\in M(\tau)\}.
\end{align*}

\end{proposition}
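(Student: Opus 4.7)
The proof proposal is to derive this statement as a direct consequence of Proposition~\ref{pro-current-string}, by matching the set-builder characterization of $\mathcal{E}^i(\tau,X_0)$ with the characterization of $M(\tau)$. The structure will be a standard double-inclusion argument, with the single non-trivial step being to reconcile the multi-index condition ``$\forall i \in \mathcal{I}: P_i(t)=\tau^{(i)}$'' appearing in Proposition~\ref{pro-current-string}.1) with the condition ``$P_{\Upsilon_{\mathbb{N}}}(t)=\tau$'' that defines $\mathcal{E}^i$.

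For the forward inclusion ($\subseteq$), I would take $x \in \mathcal{E}^i(\tau,X_0)$ and unfold the definition: there exists $u \in L(G,x)$ such that $P_{\Upsilon_{\mathbb{N}}}(u)=\tau$. Since by hypothesis $\tau \in T_c$ (that is, exactly one synchronization occurs and it does so immediately after $u$), Definition~\ref{def-projection} forces $\tilde{u}=u$ and $SI(u)=\tau$, which unwinds componentwise to $P_i(u)=\tau^{(i)}$ for every $i \in \mathcal{I}$. Pick any $x' \in \delta(x,u)$, which is nonempty since $u \in L(G,x)$. Then $u$ together with $x'$ witnesses the right-hand side of Proposition~\ref{pro-current-string}.1), so $(x,x') \in M(\tau)$; combined with $x \in X_0$, this places $x$ in the set on the right-hand side.

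For the reverse inclusion ($\supseteq$), I would take $x \in X_0$ and $x' \in X$ with $(x,x') \in M(\tau)$. Applying Proposition~\ref{pro-current-string}.1) in the other direction yields a string $t \in L(G,x)$ with $x' \in \delta(x,t)$ and $P_i(t)=\tau^{(i)}$ for all $i$. Here I would argue that, because $\tau \in T_c$, the synchronization strategy $\Upsilon_{\mathbb{N}}$ fires exactly at the end of $t$: indeed, $|\tau^{(i)}|=\kappa_i$ for at least one $i$ (since $\tau$ is critical) and $|\tau^{(j)}| \le \kappa_j$ for all $j$, so $\tilde{t}=t$ and $P_{\Upsilon_{\mathbb{N}}}(t)=SI(t)=\tau$. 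Since $P_{\Upsilon_{\mathbb{N}}}(s)=\tau=P_{\Upsilon_{\mathbb{N}}}(t)$ and $t \in L(G,x)$ with $x \in X_0$, the definition of $\mathcal{E}^i$ yields $x \in \mathcal{E}^i(\tau,X_0)$.

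The only subtlety—and hence the main ``obstacle,'' though a mild one—is the bookkeeping that translates the componentwise projection equalities $P_i(t)=\tau^{(i)}$ into the single DO-projection equality $P_{\Upsilon_{\mathbb{N}}}(t)=\tau$, and in particular the observation that being in $T_c$ guarantees that exactly one synchronization occurs at the end of $t$, so that no earlier synchronization splits $P_{\Upsilon_{\mathbb{N}}}(t)$ into multiple SI-states. Once that point is made, the proof is essentially a two-line invocation of Proposition~\ref{pro-current-string}, paralleling the proof of Proposition~\ref{pro-current-entend}.
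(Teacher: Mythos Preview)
Your approach is correct and essentially matches the paper's own proof, which simply observes that the first component of each $\rho$ in the CSS structure is preserved by $h_r$ and then invokes Proposition~\ref{pro-current-string}. One minor bookkeeping slip in your forward inclusion: from $P_{\Upsilon_{\mathbb{N}}}(u)=\tau$ you cannot conclude $\tilde{u}=u$, since $u$ may continue past the (single) synchronization point; the easy fix is to work with $\tilde{u}$ in place of $u$ (so that $P_i(\tilde{u})=\tau^{(i)}$ for all $i$, and take $x'\in\delta(x,\tilde{u})$), after which your argument goes through unchanged.
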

\begin{proof}
According to the definition of $h_r$ in Definition \ref{def-CSS structure}, the first element of $\rho$ remains unchanged.
With this understanding, this proof follows directly from Proposition \ref{pro-current-string}.
\end{proof}

\begin{example}
	Let us consider the system in Example \ref{example-2}. 
	When the sequence $s_1=\alpha_{12}\lambda\gamma_3\alpha_{12}$ is generated from state $x_0$, the SI-states are as follows, in chronological order: $(\alpha_{12},\alpha_{12},\epsilon)$, $(\alpha_{12},\alpha_{12},\gamma_3)$, $(\alpha_{12}\alpha_{12},\alpha_{12}\alpha_{12},\gamma_3)$. 
	Synchronization occurs after the sequence $s_1$, and the coordinator receives $\tau_5=(\alpha_{12}\alpha_{12},\alpha_{12}\alpha_{12},\gamma_3)$. 
	Based on the CSS structure shown in Fig.~\ref{fig-CSS-structure-1}: 1) If we aim to estimate the current-state of the system, the corresponding state estimates are $\{x_2,x_3,x_4\}$, since $(x_0,x_2,3)$, $(x_0,x_3,3)$, and $(x_0,x_4,3)$ are transitioned to by $\tau_5$.
	2) If we aim to estimate the initial-state of the system solely based on the sequence $s_1$, the corresponding initial-state estimate is $\{x_0\}$, since the first state component of the state pairs that are transitioned to by $\tau_5$ is $x_0$.\hfill\rule{1ex}{1ex}
	
\end{example}

As the CSS structure $\mathcal{T}$ is constructed from any state $x\in X$, we can obtain the state estimates from any system state upon any possible CSI-state.
However, this may increase the complexity of building such a structure, since it is not necessary to identify all possible CSI-states from every system state.
In other words, some CSI-states in $\mathcal{T}$ may not be received by the coordinator during the system evolution.
In Algorithm~\ref{algo1-CSS structure}, we demonstrate how to construct this structure such that only the CSI-states that can actually be received by the coordinator during the system evolution are computed. 
We denote this structure as $\mathcal{T}_{fe}$ (feasible CSS structure).

In Algorithm~\ref{algo1-CSS structure}, the set $\mathcal{X}_0$ is initialized with pairs of identical system initial states.
Starting from $\mathcal{X}_0$, the structure is expanded via a breadth-first search that alternates between SI-state layers and state pair layers.
The variable $l_u$ denotes the current depth of state pair layers in this search.
The procedures $\Call{SI-State}{\tau}$ and $\Call{State-SI}{\rho}$ explore successor state pairs and SI-states, respectively, building the transition relations in $h_r$ and $h_a$.
In lines \ref{algo1-add to ini1}--\ref{algo1-add to ini2}, 
state $x$ represents a possible state estimate under CSI-state $\tau$, and $(x, x, 0)$ is added to $\mathcal{X}_0$ to allow further CSI-state exploration in the next synchronization.
This mechanism enables efficient construction of the feasible CSS structure, particularly when $\mathcal{X}_0$ is initialized as $\{(x_0,x_0,0)|x_0\in X_0\}$ to avoid unnecessary computation.

\begin{algorithm}[t]
	\caption{Construction of Feasible CSS Structure.}\label{algo1-CSS structure}
	\begin{algorithmic}[1]
		\Require System $G$ and DO-based protocol $\Upsilon_{\mathbb{N}}$.
		\Ensure  A feasible CSS structure $\mathcal{T}_{fe}=(\mathcal{X},T,h_a,h_r,E_{\mathcal{I}}\cup\{\epsilon\},T_0, \allowbreak \mathcal{X}_0,T_c)$.
		\State $\mathcal{X}_0\gets \{(x_0,x_0,0)|x_0\in X_0\}$, $T\gets T_0$, $\mathcal{X}\gets\mathcal{X}_0$, 
		$T_0\gets \{\tau_0\}$, $h_a\gets\emptyset$, $h_r\gets\emptyset$, $T_{in}\gets\{\tau_0\}$;
		\While{$\exists\rho\in\mathcal{X}_0$ that has not been examined}
		\For{$\rho\in\mathcal{X}_0$ that has not been examined} \label{algo1-examine-ini}
		\State $l_u\gets 0$ and $h_r\gets h_r\cup\{(\tau_0, \epsilon,\rho)\}$;
		\EndFor
		\While{$T_{in}\neq\emptyset$}
		\State $T_{in}\gets\emptyset$;
		\For{$\rho\in\mathcal{X}$, s.t. $\rho_d=l_u$, that has not been examined}
		\State $\Call{State-SI}{\rho}$, $T_{in}\gets T_{in}\cup\{\tau|(\rho,\sigma,\tau)\in h_a\}$;
		\EndFor
		\For{$\tau\in T_{in}$}
		\State  $\Call{SI-State}{\tau}$;
		\EndFor
		\State $l_u\gets l_u+1$;
		\EndWhile
		\EndWhile
		\State Return $\mathcal{T}_{fe}$; 
		\Procedure{State-SI}{$\rho$}
		\For{$(\tau',\sigma',\rho)\in h_r$ that has not been examined}
		\State $h_a\gets \{(\rho,\sigma,\tau)|\exists\sigma\in E_{\mathcal{I}}$: $\hat{h}(\tau',\sigma)=\tau\land\operatorname{R}_{\sigma}(\rho_1)\neq\emptyset\}$, $T\gets T\cup\{\tau\}$;
		\If{$\tau$ is a CSI-state}
		\State $T_c\gets T_c\cup\{\tau\}$;
		\EndIf
		\EndFor
		\EndProcedure
		\Procedure{SI-State}{$\tau$}
		\For{$(\rho,\sigma,\tau)\in h_a$ that has not been examined}
		\State $h_r\gets h_r\cup\{(\tau, \sigma, (\rho_0,x,\rho_d+1))|x\in\operatorname{R}_{\sigma}(\rho_1)\}$, $\mathcal{X}\gets \mathcal{X}\cup\{(\rho_0,x,\rho_d+1)|x\in\operatorname{R}_{\sigma}(\rho_1)\}$, $L\gets L\cup\{\rho_d+1\}$;\label{algo1-add to ini1}
		\If{$\tau\in T_c$}
		\State $\mathcal{X}_0\gets\mathcal{X}_0\cup\{(x,x,0)|x\in\operatorname{R}_{\sigma}(\rho_1)\}$; \label{algo1-add to ini2}
		\EndIf
		\EndFor
		\EndProcedure
	\end{algorithmic}
\end{algorithm}

\begin{proposition}
	Let $\mathcal{T}_{fe}=(\mathcal{X},T, h_a,h_r,E_{\mathcal{I}}\cup\{\epsilon\},T_0, \mathcal{X}_0,T_c)$ be the feasible CSS structure of a system $G$ under DO-based protocol $\Upsilon_{\mathbb{N}}$. The following hold:
	\begin{enumerate}
		\item $\bigcup_{s\in L(G)}SI(s)=T$; 
		\item $\bigcup_{s\in L(G)}SI(\tilde{s})=\{\tau_0\}\cup T_c$.
	\end{enumerate}
\end{proposition}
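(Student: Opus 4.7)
The plan is to prove each statement by establishing both inclusions, relying crucially on Proposition~\ref{pro-current-string} together with the invariant maintained by Algorithm~\ref{algo1-CSS structure} that $\mathcal{X}_0$ grows to contain exactly the pairs $(x,x,0)$ for every state $x$ that is either an original initial state of $G$ or a state reachable immediately after a synchronization that can actually occur during an execution of $G$.

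For the inclusion $\bigcup_{s\in L(G)}SI(s)\subseteq T$ in statement~1, I would proceed by induction on the number $j$ of synchronizations contained in $\tilde{s}$. In the base case $j=0$, we have $\tilde{s}=\epsilon$ and $s$ evolves from some $x_0\in X_0$. Since $(x_0,x_0,0)\in\mathcal{X}_0$ is inserted at initialization, and the procedures \textsc{State-SI} and \textsc{SI-State} together replicate the clauses of $h_a$ and $h_r$ given in Definition~\ref{def-CSS structure}, an argument analogous to the proof of Proposition~\ref{pro-current-string}.2) ensures $SI(s)\in T$. For the inductive step, decompose $s=\tilde{s}\cdot(s/\tilde{s})$ where $\tilde{s}$ contains exactly the first $j$ synchronizations, and let $x_l$ be any state reached in $G$ after executing $\tilde{s}$. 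By the induction hypothesis applied to $\tilde{s}$, $SI(\tilde{s})\in T_c$, and hence by line~\ref{algo1-add to ini2} of Algorithm~\ref{algo1-CSS structure} the pair $(x_l,x_l,0)$ has been added to $\mathcal{X}_0$. Treating $s/\tilde{s}$ as a fresh execution from $x_l$ then reduces the remaining argument to the base case.

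For the reverse inclusion $T\subseteq\bigcup_{s\in L(G)}SI(s)$, I would trace each $\tau\in T$ backwards through $h_r$ and $h_a$ to some pair in $\mathcal{X}_0$, using induction on the layer component $\rho_d$ of the state that placed $\tau$ into $T$. The key auxiliary claim is that every pair $(x,x,0)$ occurring in $\mathcal{X}_0$ is realizable, meaning there exists $s_0\in L(G)$ with $\tilde{s}_0=s_0$ such that $x$ is reachable by $s_0$; this holds for the originally inserted elements by definition of $X_0$, and for later additions via line~\ref{algo1-add to ini2} by the induction hypothesis applied to the triggering CSI-state together with Proposition~\ref{pro-current-string}.1). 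Concatenating the witness behavior provided by Proposition~\ref{pro-current-string}.1) for the state pair in $M(\tau)$ with such an $s_0$ then yields the desired $s\in L(G)$ with $SI(s)=\tau$.

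Statement~2 follows by specializing statement~1 to the CSI-states. By the definition of $\hat{h}$ together with the synchronization condition of $\Upsilon_{\mathbb{N}}$, an SI-state $\tau\in T\setminus\{\tau_0\}$ is unable to absorb any $\sigma\in E_{\mathcal{I}}$ precisely when $|\tau^{(i)}|=\kappa_i$ for some $i\in\mathcal{I}$, which matches both the membership condition for $T_c$ used by \textsc{State-SI} and the condition characterizing $SI(\tilde{s})$ for runs with $\tilde{s}\neq\epsilon$. The remaining case $\tilde{s}=\epsilon$, giving $SI(\tilde{s})=\tau_0$, is covered by the initialization $T_0=\{\tau_0\}$. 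The main obstacle I anticipate is synchronizing the two layers of induction---one on the number of synchronizations in a system behavior and one on the breadth-first expansion driven by Algorithm~\ref{algo1-CSS structure} that populates $\mathcal{X}_0$---so as to avoid circular reasoning; I would address this by defining a well-founded measure combining the two counts and proving both directions simultaneously under that measure.
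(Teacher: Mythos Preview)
Your plan is correct and considerably more detailed than the paper's own proof, which reads in its entirety: ``This follows directly from Algorithm~\ref{algo1-CSS structure}, by induction on the length of $P_{\mathcal{I}}(s)$.'' The paper thus inducts event by event along the observable projection, whereas you stratify by the number of synchronizations and delegate the within-synchronization reasoning to Proposition~\ref{pro-current-string}. Both decompositions work; yours is more modular because it cleanly separates the outer while-loop of Algorithm~\ref{algo1-CSS structure} (which grows $\mathcal{X}_0$) from the inner breadth-first expansion (which Proposition~\ref{pro-current-string} already handles for a fixed starting state), while the paper's single induction on $|P_{\mathcal{I}}(s)|$ folds both levels together. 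Your explicit treatment of the reverse inclusion $T\subseteq\bigcup_{s}SI(s)$ via realizability of every $(x,x,0)\in\mathcal{X}_0$ is also something the paper's one-liner leaves implicit.

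One point to tighten: when you invoke Proposition~\ref{pro-current-string}.1) to obtain a witness $t\in L(G,x)$ with $P_i(t)=\tau^{(i)}$ and then set $s=s_0t$, you need $t$ to contain no intermediate synchronization, otherwise $\tilde{s}\neq s_0$ and $SI(s)\neq\tau$. An arbitrary $t$ satisfying the projections need not have this property, but the specific witness constructed in the proof of Proposition~\ref{pro-current-string}.1) does, because it follows a path in the CSS structure along which every intermediate SI-state satisfies $\tau'\models\Upsilon_{\mathbb{N}}$ (this is built into the domain of $\hat{h}$). You should make this explicit rather than relying on the bare existential statement of the proposition.
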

\begin{proof}
This follows directly from Algorithm~\ref{algo1-CSS structure}, by induction on the length of $P_{\mathcal{I}}(s)$.
\end{proof}

\begin{remark}
Different synchronization strategies lead to different CSS structure structures, characterized by the function $\hat{h}$ in the definition of $h_a$. 
At first glance, the computation complexity of both DO-based current-state estimation and DO-based initial-state estimation (simply taking the union of states in precomputed feasible CSS structure) at one synchronization step appears lower than the online approach described in \cite{SunHadjicostisLi2023}. 
However, this does not negate the necessity of the online method. 
Note that the CSS structure must be constructed beforehand when employing state estimation using the method in this paper. 
This introduces additional computational costs compared with the one-time online method.  
The online method offers more flexibility in handling situations, where the synchronization strategy is not fixed or not predetermined. 
For example, in one synchronization step,  $|\mathcal{P}_i(s)|\leq 2$ for all $i\in\mathcal{I}$ in $1)$ in Section \ref{sec-2}.B, while in the next synchronization step, it could change to $|\mathcal{P}_i(s)|\leq 8$ for all $i\in\mathcal{I}$ to minimize the costs associated with communication.
However, as analyzed later, when a DO-based protocol employs a fixed synchronization strategy, the CSS structure can be effectively utilized to construct state-estimators (observers).
\end{remark}

\subsection{Current-State and Initial-State Estimator Construction}

Since a CSS structure reflects the system evolution from any system state to possible system states through potential CSI-states, it can be employed to recursively solve both current and initial state estimation problems upon a sequence of CSI-states in accordance with Propositions \ref{pro-current-entend} and \ref{pro-ini}, respectively.
Furthermore, in many applications, such as diagnosability and opacity, it is essential to have detailed knowledge of all state estimates prior to the system's operation, particularly for verification purposes.
In light of this, we introduce the notions of DO-based current-state estimator (DO-observer) and DO-based initial-state estimator below.

\begin{definition}\label{DO-observer}
Consider system $G=(X,E, \delta, X_0)$ under DO-based protocol $\Upsilon_{\mathbb{N}}$, and let $\mathcal{T}_{fe}=(\mathcal{X},T,h_a,h_r,E_{\mathcal{I}}\cup\{\epsilon\},\allowbreak T_0, \mathcal{X}_0,T_c)$ be its feasible CSS structure. 
The DO-observer of $G$ is a deterministic finite automaton (DFA) $G^{\Upsilon_{\mathbb{N}}}_{obs}=Ac(X_{obs},T_c,f_{obs},\allowbreak X_{obs,0})$\footnote{$Ac(\cdot)$ is the accessible part of $\cdot$, i.e., the part that can be reached from the initial state.} where
	\begin{itemize}
		\item $X_{obs}\subseteq 2^X\setminus\{\emptyset\}$ is the set of states;
		\item $T_c$ is the set of CSI-states, serving as the transition events;
		\item $X_{0,obs}=\operatorname{UR}(X_0)\in X_{obs}$ is the initial state;
		\item $f_{obs}:X_{obs}\times T_c\rightarrow X_{obs}$ is the transition function defined as: for any $q\in X_{obs}$ and $\tau\in T_c$, it holds $f_{obs}(q,\tau)=\{x|\exists x'\in q: x\in \mathcal{E}^c(\tau,x')\}$.
	\end{itemize}
\end{definition}

Similar to a standard observer \cite{Hadjicostis2020}, the DO-observer contains all possible state estimates following any sequence of CSI-states starting from the initial state set $X_{0,obs}$. 
The key difference, however, lies in the driving events: in the DO-observer, the transitions from one state to another are driven by elements in $T_c$, rather than (observable) events in $E_{\mathcal{I}}$. 
This arrangement is logical, as the coordinator can only update its state estimate upon receiving synchronization information, i.e., a CSI-state.

Note that the initial state of the DO-observer is assigned as the unobservable reach of the system initial states. 
Actually, every subsequent state is assigned as the unobservable reach of the set of state estimates immediately following the last synchronization. 
These settings are implemented since the coordinator updates its state estimate based on the state estimates prior to the occurrence of a CSI-state.

Function $f_{obs}$ operates as follows: based on the current state $q$, it explores any $\tau$ in $T_c$, considered as its synchronization information, such that $\mathcal{E}^c(\tau,q)$ is not empty. 
The result is then adopted as the next state of the estimator.
The domain of function $f_{obs}$ can be extended to $X_{obs}\times T_c^*$ in the standard recursive manner: $f_{obs}(q,\lambda\tau)=f_{obs}(f_{obs}(q,\lambda),\tau)$ for $q\in X_{obs}$, $\lambda\in T_c^*$, and $\tau\in T_c$ ($f_{obs}(q,\lambda\tau)$ is undefined if $f_{obs}(q,\lambda)$ is undefined).
\begin{proposition}\label{pro-obs}
	Given a system $G$ and its DO-observer $G^{\Upsilon_{\mathbb{N}}}_{obs}$, we have (i) $\bigcup_{s\in L(G)}{P_{\Upsilon_{\mathbb{N}}}(s)}= L(G^{\Upsilon_{\mathbb{N}}}_{obs})$, and (ii) for any $s\in L(G)$, $\mathcal{E}^c(P_{\Upsilon_{\mathbb{N}}}(s),X_0)=f_{obs}(X_{0,obs},P_{\Upsilon_{\mathbb{N}}}(s))$.
\end{proposition}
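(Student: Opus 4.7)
The plan is to prove both claims (i) and (ii) simultaneously by induction on the number of synchronizations contained in $s$, i.e., on $|P_{\Upsilon_{\mathbb{N}}}(s)|$ viewed as a string in $T_c^*$. The two claims are tightly intertwined: knowing that a projection word belongs to $L(G^{\Upsilon_{\mathbb{N}}}_{obs})$ is essentially the same assertion as knowing that the observer reaches a well-defined state whose content equals the global state estimate $\mathcal{E}^c$. Hence a joint induction is natural, where each direction of (i) is certified by (ii).

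For the base case, suppose $\tilde{s}=\epsilon$ so that no synchronization has occurred. Then $P_{\Upsilon_{\mathbb{N}}}(s)$ is the empty word over $T_c^*$, which trivially belongs to $L(G^{\Upsilon_{\mathbb{N}}}_{obs})$, settling (i). For (ii), the definition of DO-CSE with $\tilde{u}=\epsilon$ reduces $\mathcal{E}^c(\epsilon, X_0)$ to $\operatorname{UR}(X_0)$, which is precisely $X_{0,obs}=f_{obs}(X_{0,obs},\epsilon)$ by Definition~\ref{DO-observer}. For the inductive step I assume the claim for all behaviors whose projection has length $k$, and take $s\in L(G)$ with $P_{\Upsilon_{\mathbb{N}}}(s)=\lambda\tau$ where $|\lambda|=k$ and $\tau\in T_c$. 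The definition of the DO-projection allows me to isolate the prefix $s'\in\overline{\tilde{s}}\setminus\{\tilde{s}\}$ with $f_{\mathcal{I}}(s')=1$ that is responsible for the last synchronization before the one generating $\tau$, so that $P_{\Upsilon_{\mathbb{N}}}(s')=\lambda$, and the suffix $s''=\tilde{s}/s'$ satisfies $SI(s'')=\tau$ by Proposition~\ref{pro-current-string}.2.

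Now I apply the inductive hypothesis to $s'$: $\lambda\in L(G^{\Upsilon_{\mathbb{N}}}_{obs})$ and $\mathcal{E}^c(\lambda, X_0)=f_{obs}(X_{0,obs},\lambda)=:q$. The remaining behavior $s''$ is a single-synchronization execution starting from a state in $\delta(X_0,s')\subseteq q$, so Proposition~\ref{pro-current-entend} (instantiated with $X'=q$) yields $\mathcal{E}^c(\tau,q)=\{x\in X\mid\exists x'\in q:(x',x)\in M(\tau)\}$. Comparing with the definition of $f_{obs}$ and using Proposition~\ref{pro-current-string}.1 to translate $M(\tau)$ into the condition $x\in\mathcal{E}^c(\tau,x')$, this is exactly $f_{obs}(q,\tau)$. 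Composing with the recursive extension $f_{obs}(X_{0,obs},\lambda\tau)=f_{obs}(f_{obs}(X_{0,obs},\lambda),\tau)$ gives (ii). Membership in the observer's language follows automatically, since $f_{obs}(q,\tau)$ is defined and nonempty, yielding one direction of (i); the converse direction uses the same chain read backwards, together with Proposition~\ref{pro-current-string}.1 to realize a system trajectory whose DO-projection is $\lambda\tau$.

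The main obstacle I expect is the bookkeeping around the unobservable reach and around the correct identification of the ``starting set'' for each synchronization segment. Concretely, I must argue that treating $\mathcal{E}^c(\lambda,X_0)$ as the seed for the next single-synchronization estimate gives the same answer as the monolithic definition of $\mathcal{E}^c(\lambda\tau,X_0)$. This reduces to verifying that all trailing unobservable transitions after the sync corresponding to $\lambda$ are captured either by the definition of $\mathcal{E}^c$ itself (which includes $\operatorname{UR}$ of $\delta(x_0,\tilde{u})$) or by the segment $s''$ whose SI-state is $\tau$, and that no information is double-counted or lost at the splicing point. Once this coherence is established via the characterization of $M(\tau)$ in Proposition~\ref{pro-current-string}.1, the induction closes cleanly.
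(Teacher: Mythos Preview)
Your proposal is correct and follows essentially the same approach as the paper: induction on the number of synchronizations, with the base case $\tilde s=\epsilon$ giving $\operatorname{UR}(X_0)=X_{0,obs}$ and the inductive step handled via Proposition~\ref{pro-current-entend} together with the recursive definition of $f_{obs}$. The only organizational difference is that the paper treats (i) by a direct decomposition argument and reserves the induction for (ii), whereas you do both jointly; your explicit flagging of the compositionality identity $\mathcal{E}^c(\lambda\tau,X_0)=\mathcal{E}^c(\tau,\mathcal{E}^c(\lambda,X_0))$ is a point the paper's proof uses without spelling out.
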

\begin{proof}
We first prove (i).
	
($\subseteq$) 
Given a string $s\in L(G)$, its corresponding run is $\vec{s}=s_1{\scriptstyle\sim}s_2{\scriptstyle\sim} \dots {\scriptstyle\sim}s_j{\scriptstyle\sim} \dots {\scriptstyle\sim} s_n{\scriptstyle\sim}s_{n+1}$, where $s_j=s_{j1}s_{j2}\dots s_{jk_{j}}$, for any $j\in\{1,2,\dots,n\}$ and $|s_j|=k_j$.
Then, there exist $x_0\in X_0$, $x_j\in X$,  such that $x_j\in\delta(x_{j-1},s_j)$.
According to the definition of function $f_{obs}$, there exist $\iota\in T^*_c$, and $X_j\in 2^X$, for any $j\in\{1,2,\dots,n\}$, such that $\iota=(P_1(s_1),\dots,P_m(s_1))\dots(P_1(s_j),\dots,P_m(s_j))\dots(P_1(s_n),\\\dots, P_m(s_n))$, $f_{obs}(X_{j-1},(P_1(s_j), \dots,P_m(s_j)))=X_j$, and $x_j\in X_j$.
Together, these indicate $\iota\in L(G^{\Upsilon_{\mathbb{N}}}_{obs})$, thereby completing the proof of the ($\subseteq$)  relation.
	
The reverse ($\supseteq$) can be proven similarly, which completes the proof of (i).
	
We prove (ii) by induction on the length of $|P_{\Upsilon_{\mathbb{N}}}(s)|$, i.e.,  the number of CSI-states received by the coordinator.
Clearly, the basic induction step holds: when $|P_{\Upsilon_{\mathbb{N}}}(s)|=0$ (i.e., $\tilde{s}=\epsilon$), according to the definition of DO-CES, $\mathcal{E}^c(\epsilon,X_0)=\operatorname{UR}(X_0)=X_{0,obs}$, i.e., the initial state of $G^{\Upsilon_{\mathbb{N}}}_{obs}$.
	
For the induction step, we assume that the induction hypothesis is true when $|P_{\Upsilon_{\mathbb{N}}}(s)|=k$, $k\in\mathbb{N}$.
Then, let us consider the case when $|P_{\Upsilon_{\mathbb{N}}}(s)|=k+1$.
Suppose that $s=s_1 {\scriptstyle\sim} s_2{\scriptstyle\sim}  \dots \allowbreak{\scriptstyle\sim} s_j {\scriptstyle\sim} \dots {\scriptstyle\sim} s_k{\scriptstyle\sim}s_{k+1}{\scriptstyle\sim}s_{k+2}$ and $t=s_1 {\scriptstyle\sim} s_2 \dots {\scriptstyle\sim} s_j {\scriptstyle\sim} \dots {\scriptstyle\sim} s_k$.	
According to the inductive assumption,  
$\mathcal{E}^c(P_{\Upsilon_{\mathbb{N}}}(t),X_0)=f_{obs}\allowbreak(X_{0,obs},P_{\Upsilon_{\mathbb{N}}}(t))$.
Let $\tau=(P_1(s_{k+1}),\dots,P_m(s_{k+1}))$.
Based on Proposition~\ref{pro-current-entend}  and the function $f_{obs}$, we know
$\mathcal{E}^c(\tau,\mathcal{E}^c(P_{\Upsilon_{\mathbb{N}}}(t), X_0))=\allowbreak f_{obs}(f_{obs}(X_{0,obs},P_{\Upsilon_{\mathbb{N}}}(t)),\tau)$, thereby completing the proof.
\end{proof}

%
%

\begin{figure}[htbp]
	\centering
	\includegraphics[scale=0.6]{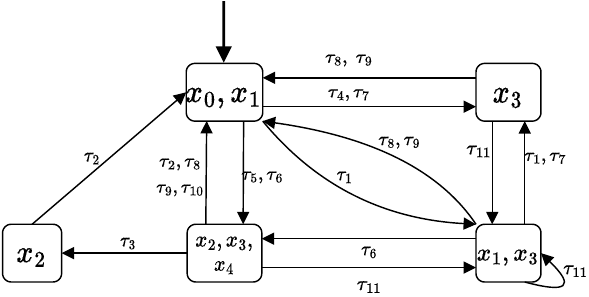}
	\caption{DO-observer for $G$, where the transitions labeled as ``$\tau$'' correspond to the labels shown in Fig.~\ref{fig-CSS-structure-1}.}
	\label{DO-observer-fig}
\end{figure}
\begin{example}
Let us revisit system $G$ in Example \ref{example-2}. 
Since its feasible CSS structure is identical to that shown in Fig.~\ref{fig-CSS-structure-1}, we refer to it as the feasible CSS structure hereafter. 
Assume that the initial state set is $\{x_0,x_1\}$. 
The DO-observer for $G$ is depicted in Fig.~\ref{DO-observer-fig}, with the initial state being $\{x_0,x_1\}$.
Suppose that at the first synchronization, the coordinator receives the CSI-state $\tau_5=(\alpha_{12}\alpha_{12},\alpha_{12}\alpha_{12},\gamma_3)$.
From Fig.~\ref{fig-CSS-structure-1}, we obtain $M(\tau_5)=\{(x_0,x_3),(x_0,x_4),(x_0,x_2)\}$.
According to Proposition~\ref{pro-current-entend}, the current-state estimate of the system and the corresponding DO-observer's state is given by $f(\{x_0,x_1\}, \tau_5) = \mathcal{E}^c(\tau_5, \{x_0,x_1\}) = \{x_2, x_3, x_4\}$.
Similarly, if at the second synchronization, the coordinator receives the CSI-state $\tau_3$, the current-state estimate and the corresponding state in the DO-observer becomes $\{x_3\}$.
In summary, from the system's initial state, by receiving the sequence of CSI-states $\tau_5\tau_3$, the coordinator infers that the current-state estimate is
$f_{obs}(\{x_0,x_1\}, \tau_5\tau_3) = {x_3}$.
The remaining part of the DO-observer can be constructed accordingly.
\hfill\rule{1ex}{1ex}
\end{example}
\begin{remark}
Given that our focus is on interpreting the synchronization strategy, specifically the construction of a CSS structure and the verification of relevant properties from the coordinator’s perspective, the DO-based current-state estimation is defined to concentrate on the instant when synchronization occurs, even if multiple observations are generated between synchronizations.
This notion, along with the DO-observer, can be clearly extended to account for what happens between synchronizations based on different purposes, which correlates with the discussion in Remark \ref{remark-aftersync}.

\end{remark}
Next, we define the notion of a DO-based initial-state estimator, which aligns with the trellis-based approach discussed in the literature \cite{SabooriHadjicostis2008ini,Hadjicostis2020}.
\begin{definition}
	Consider system $G=(X,E, \delta, X_0)$ under DO-based protocol $\Upsilon_{\mathbb{N}}$, and let $\mathcal{T}_{fe}=(\mathcal{X},T, h_a,h_r,E_{\mathcal{I}}\cup\{\epsilon\},T_0, \mathcal{X}_0,T_c)$ be its feasible CSS structure. 
	The DO-based initial-state estimator of $G$ is a DFA $G^{\Upsilon_{\mathbb{N}}}_{i,obs}=Ac(X^{i}_{obs},T_c,f^i_{obs},X^{i}_{0,obs})$ where 
	\begin{itemize}
		\item $X^i_{obs}\subseteq 2^{X\times X}\backslash\{\emptyset\}$ is the set of states;
		\item $T_c$ is the set of CSI-states, serving as the transition events;
		\item $X^{i}_{0,obs}=\{(x_0,x_0)|x_0\in X_0\}$ is the initial state;
		\item $f^i_{obs}:X^i_{obs}\times T_c\rightarrow X^i_{obs}$ is the transition function defined as: for any $m\in X^i_{obs}$ and $\tau\in T_c$, it holds $f^i_{obs}(m,\tau)=\{(x_1,x_3)|\exists (x_1,x_2)\in m,(x_2,x_3)\in M(\tau)\}.$
	\end{itemize}
\end{definition}
The initial state $X^{i}_{0,obs}$ is the set of pairs of identical initial system states.
Each element in any other state $m\in X^i_{obs}\backslash \{X^{i}_{0,obs}\}$ contains the initial state component and the corresponding reachable current state component upon a sequence of CSI-states so far.
This mechanism is implemented by function $f^i_{obs}$.
Like function $f_{obs}$, the domain of $f^i_{obs}$ can be extended to $X^i_{obs}\times T_c^*$ in a similar manner. For any state $m\in X^i_{obs}$, we use $F(m)$ to denote the set of first components in $m$, i.e., $F(m)=\{x|\exists x'\in X:(x,x')\in m\}$. 
Then, we have the following result.
\begin{proposition}\label{prop-ini}
Given a system $G$ and its  DO-based initial-state estimator $G^{\Upsilon_{\mathbb{N}}}_{i,obs}$,  
$\bigcup_{s\in L(G)}{P_{\Upsilon_{\mathbb{N}}}(s)}=L(G^{\Upsilon_{\mathbb{N}}}_{i,obs})$ and $		\mathcal{E}^{i}(P_{\Upsilon_{\mathbb{N}}}(s),X_0)=F(f^i_{obs}(X^i_{0,obs},P_{\Upsilon_{\mathbb{N}}}(s)))$ hold.
\end{proposition}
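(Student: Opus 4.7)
The plan is to follow the structure of Proposition \ref{pro-obs}, treating the two conclusions as separate items. For the language identity $\bigcup_{s\in L(G)}P_{\Upsilon_{\mathbb{N}}}(s)=L(G^{\Upsilon_{\mathbb{N}}}_{i,obs})$, I would argue both inclusions directly. For $(\subseteq)$, given $s\in L(G)$ with run $\vec{s}=s_1{\scriptstyle\sim}\dots{\scriptstyle\sim}s_n{\scriptstyle\sim}s_{n+1}$, I construct the corresponding trajectory $X^i_{0,obs}, m_1, \dots, m_n$ in $G^{\Upsilon_{\mathbb{N}}}_{i,obs}$ by inductively applying $f^i_{obs}$ against each CSI-state $\tau_j=(P_1(s_j),\dots,P_m(s_j))$, invoking Proposition \ref{pro-current-string}.1 to certify that the required pairs $(x_{j-1}, x_j)\in M(\tau_j)$ exist along the run. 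The $(\supseteq)$ direction is symmetric: from a sequence of CSI-states accepted by $G^{\Upsilon_{\mathbb{N}}}_{i,obs}$, Proposition \ref{pro-current-string}.1 lets us back out witnessing strings between consecutive synchronizations and concatenate them into an $s\in L(G)$ with the correct DO-projection.

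For the second identity, I would proceed by induction on $k=|P_{\Upsilon_{\mathbb{N}}}(s)|$. The base case $k=0$ is immediate: then $\tilde{s}=\epsilon$, so $\mathcal{E}^i(\epsilon, X_0)=X_0$, while $F(X^i_{0,obs})=\{x\mid (x,x)\in X^i_{0,obs}\}=X_0$. For the inductive step, write $P_{\Upsilon_{\mathbb{N}}}(s)=\lambda\tau$ with $|\lambda|=k$ and $\tau\in T_c$, and let $m=f^i_{obs}(X^i_{0,obs},\lambda)$. By the induction hypothesis $F(m)=\mathcal{E}^{i}(\lambda,X_0)$, and more strongly $m$ contains exactly the pairs $(x_0,x)$ with $x_0\in X_0$ and $x$ reachable from $x_0$ via some behavior whose DO-projection is $\lambda$. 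Unfolding the definition of $f^i_{obs}$, the image $f^i_{obs}(m,\tau)$ consists of pairs $(x_1,x_3)$ with $(x_1,x_2)\in m$ and $(x_2,x_3)\in M(\tau)$. Applying $F$ yields exactly those $x_1\in X_0$ for which some intermediate $x_2$ reached under $\lambda$ admits a further extension $x_3$ consistent with $\tau$, which is precisely the definition of $\mathcal{E}^i(\lambda\tau,X_0)$. Proposition \ref{pro-ini} underwrites the correctness of the one-synchronization extension via $M(\tau)$.

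The main obstacle is ensuring that the pair-tracking machinery correctly preserves the link between initial states and current states across successive synchronizations, so that the projection $F$ genuinely recovers all (and only) those initial states from which some system trajectory with DO-projection equal to $P_{\Upsilon_{\mathbb{N}}}(s)$ exists. In the inductive step one must simultaneously maintain the current-state witness $x_2$ (to enable the next extension through $M(\tau)$) and suppress it when reading off the initial-state estimate via $F$. This double role is why the estimator state is a set of pairs rather than a set of single states, and verifying that the composition $m\mapsto f^i_{obs}(m,\tau)$ exactly matches appending one more CSI-state to a witnessing system behavior is the substantive content of the argument; once this is established, the language identity follows by essentially the same bookkeeping.
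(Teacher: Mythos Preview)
Your proposal is correct and follows essentially the same approach as the paper, which simply states that the proof ``employs the same strategy as that utilized in the proof of Proposition~\ref{pro-obs}, and is therefore omitted here.'' Your explicit two-part argument (language equality by mutual inclusion, then induction on $|P_{\Upsilon_{\mathbb{N}}}(s)|$ for the estimate identity) is exactly what that reference unpacks to, with the additional bookkeeping of tracking pairs $(x_0,x)$ handled as you describe.
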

\begin{proof}
The proof of this proposition employs the same strategy as that utilized in the proof of Proposition \ref{pro-obs}, and is therefore omitted here.
\end{proof}
\subsection{Complexity Analysis}
Given a system $G=(X,E,\delta,X_0)$ under DO-based protocol $\Upsilon_{\mathbb{N}}$, in the worst case, the maximum number of sequences of observations observed by $O_i$ is $\Delta_i:=|E_i|^0+|E_i|+\dots+|E_i|^j+\dots+|E_i|^{\kappa_i}$, where $j\in\{0,1,\dots,\kappa_i\}$ and $|E_i|^j$ is number of sequences of length $j$.
The number of sequences observed by $O_i$ that cannot (can) initiate a synchronization is $\Delta_i-|E_i|^{\kappa_i}=\frac{|E_i|^{\kappa_i } - 1}{|E_i| - 1}\approx|E_i|^{\kappa_i-1}$ ($|E_i|^{\kappa_i}$).
Therefore, the number of SI-states that cannot initiate a synchronization is bounded by $\prod^m_{i=1}|E_i|^{\kappa_i-1}$.
If we set $O_j$ to be the OS that initiates the synchronization, the number of CSI-states is at most $|E_j|^{\kappa_j}\prod_{i\in\mathcal{I}\setminus\{j\}}|E_i|^{\kappa_i-1}=|E_j|\cdot|E_j|^{\kappa_j-1}\cdot\prod_{i\in\mathcal{I}\setminus\{j\}}|E_i|^{\kappa_i-1}=|E_j|\cdot\prod^m_{i=1}|E_i|^{\kappa_i-1}$.
Thus, the total number of CSI-states can be bounded by $\sum_{j=1}^m(|E_j|\cdot\prod^m_{i=1}|E_i|^{\kappa_i-1})=\prod^m_{i=1}|E_i|^{\kappa_i-1}\cdot\sum_{j=1}^m|E_j|$.
Note that this is the worst case (e.g., when no observable events are shared among the OSs, which means $\sum_{j=1}^m|E_j|=|E_{\mathcal{I}}|$).
Since the number of CSI-states is bounded by $\Delta_c:=|E_{\mathcal{I}}|\cdot\prod^m_{i=1}|E_i|^{\kappa_i-1}$, the number of SI-states is bounded by $\Delta:=\Delta_c+\prod^m_{i=1}|E_i|^{\kappa_i-1}=(|E_{\mathcal{I}}|+1)\cdot\prod^m_{i=1}|E_i|^{\kappa_i-1}$.

Let $i_{u}\in\arg \max_{i\in\mathcal{I}}\kappa_i$ be one of the indices of OSs that can record the most events.
The length of the longest sequence of observable events that can be generated by the system and can initiate a synchronization is $l_u:=\kappa_{i_{u}}+\sum_{i\in\mathcal{I}\setminus\{i_{u}\}}(\kappa_i-1)$.

With the above notation at hand, the number of states in a CSS structure is at most  $l_u\cdot |X|^2+|X|+\Delta$ (where there are at most $l_u$ layers, each with at most $|X|^2$ states), the number of states with layer ``0'' is $|X|$, and that of SI-states is $\Delta$.
Hence, the state complexity of constructing a CSS structure is $O(l_u\cdot |X|^2+\Delta)$.
If there is only one initial state in a CSS structure (like an SS structure), each SI-state is reached by at most $|X|\cdot m$ transitions, and there are at most $m$ state estimation processes at each SI-state, resulting in complexity of $m\cdot|X|^2$.
Therefore, the computation complexity of constructing a CSS structure is $O((|X|\cdot m+m\cdot|X|^2)\cdot\Delta\cdot|X|)$, i.e., $O(\Delta\cdot m\cdot|X|^2(1+|X|))$.

Compared with the approach for DO-CSE in \cite{SunHadjicostisLi2023}, which has computational complexity of $O(|X|^2\cdot\prod_{i=1}^{m}(\kappa_i+1)\cdot m)$, the complexity at a synchronization step is reduced to $O(|X|^2)$, assuming that a CSS structure is precomputed.
There are at most $2^{|X|}$ states and $2^{|X|}\cdot\Delta_c$ transitions in a DO-observer. 
Therefore, the worst-case computational complexity of constructing a DO-observer is $O(2^{|X|}\cdot\Delta_c)$.
Similarly, the computational complexity of constructing the DO-based initial-state estimator is $O(2^{|X|^2}\cdot\Delta_c)$.

\section{Applications of DO-based State Estimators}

Opacity is an important notion that characterizes the security and privacy of information flow from a system to the public.
The secret of a system is  typically represented by a subset of states $X_S\subseteq X$.
In the setting considered in this paper, the system is observed by multiple observation sites that follow a DO-based protocol; thus, we consider the possibility that there exists a malicious outside observer (eavesdropper) that can track all partially ordered sequences of observations transmitted to the coordinator at each synchronization.
This consideration is natural if we treat the information flow from the OSs to the coordinator as insecure.
In this section, we define and analyze two kinds of opacity, called DO-based initial-state opacity (DO-ISO) and DO-based current-state-at-synchronization opacity (DO-CSSO).
In the end, we also provide some brief remarks on the verification of diagnosability under the DO-based protocol.

\subsection{Verification of DO-based Initial-State Opacity}
\begin{definition}
	(DO-ISO) A system $G=(X,E,\delta,X_0)$ under DO-based protocol $\Upsilon_{\mathbb{N}}$ and a set of secret initial states $X_S\subseteq X_0$ is said to be \textit{DO-based initial-state opaque} (w.r.t. $\Upsilon_{\mathbb{N}}$ and $X_S$)~if 
		\begin{align*}
		\forall s\in L(G,X_S), \exists s'\in L(G,X_0\backslash X_S):P_{\Upsilon_{\mathbb{N}}}(s)=P_{\Upsilon_{\mathbb{N}}}(s').
	\end{align*}
\end{definition}
Intuitively, DO-ISO\footnote{We use DO-ISO (DO-CSSO) as the acronym for both ``DO-based initial-state opaque/opacity'' (``DO-based current-state-at-synchronization opaque/opacity'').} requires that the intruder should never know with certainty if a system starts from a secret state after arbitrary number of synchronization steps.
With this understanding, the problem of the verification of DO-ISO can be reduced to the initial-state estimation problem, i.e., determining whether the initial-state estimates always contain a non-secret state.
Therefore, we have the following outcome.
\begin{theorem}\label{theorem-ini-1}
Given a system $G=(X,E,\delta,X_0)$ under DO-based protocol $\Upsilon_{\mathbb{N}}$ and  a set of secret initial states $X_S\subseteq X_0$, let $G^{\Upsilon_{\mathbb{N}}}_{i,obs}=(X^{i}_{obs},T_c,f^i_{obs},X^{i}_{0,obs})$ be its DO-based initial-state estimator. System $G$ is DO-ISO w.r.t. $\Upsilon_{\mathbb{N}}$ and $X_S$ if and only if for all $m\in X^{i}_{obs}$, $F(m)\nsubseteq X_S$ holds.
\end{theorem}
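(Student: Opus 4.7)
The plan is to argue the equivalence by contrapositive in both directions, using Proposition \ref{prop-ini} as the bridge between DO-projections of system runs and reachable states in the DO-based initial-state estimator $G^{\Upsilon_{\mathbb{N}}}_{i,obs}$. The core observation is that for any $s \in L(G)$, the set $F(f^i_{obs}(X^i_{0,obs}, P_{\Upsilon_{\mathbb{N}}}(s)))$ is precisely the initial-state estimate $\mathcal{E}^i(P_{\Upsilon_{\mathbb{N}}}(s), X_0)$, so containment of this set in $X_S$ corresponds exactly to the failure of initial-state opacity along that observation sequence.

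For the ``if'' direction, I would assume that every reachable state $m \in X^i_{obs}$ satisfies $F(m) \nsubseteq X_S$ and show DO-ISO directly. Given any $s \in L(G, X_S)$, define $m = f^i_{obs}(X^i_{0,obs}, P_{\Upsilon_{\mathbb{N}}}(s))$, which is defined because $P_{\Upsilon_{\mathbb{N}}}(s) \in L(G^{\Upsilon_{\mathbb{N}}}_{i,obs})$ by Proposition \ref{prop-ini}. By the hypothesis, there exists $x_0 \in F(m) \setminus X_S$, and by the identity $F(m) = \mathcal{E}^i(P_{\Upsilon_{\mathbb{N}}}(s), X_0)$, there must exist some $s' \in L(G, x_0) \subseteq L(G, X_0 \setminus X_S)$ with $P_{\Upsilon_{\mathbb{N}}}(s') = P_{\Upsilon_{\mathbb{N}}}(s)$, which is exactly the DO-ISO condition.

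For the ``only if'' direction, assume that $G$ is DO-ISO and let $m \in X^{i}_{obs}$ be arbitrary. Since $m$ is reachable, there is some $s \in L(G)$ with $f^i_{obs}(X^i_{0,obs}, P_{\Upsilon_{\mathbb{N}}}(s)) = m$. If $F(m) \cap (X_0 \setminus X_S) \neq \emptyset$, then $F(m) \nsubseteq X_S$ immediately. Otherwise $F(m) \subseteq X_S$, so the initial state $x_0$ of this particular $s$ lies in $X_S$, i.e., $s \in L(G, X_S)$; by DO-ISO there exists $s' \in L(G, X_0 \setminus X_S)$ with $P_{\Upsilon_{\mathbb{N}}}(s') = P_{\Upsilon_{\mathbb{N}}}(s)$, whose initial state must then belong to $\mathcal{E}^i(P_{\Upsilon_{\mathbb{N}}}(s), X_0) = F(m)$, contradicting $F(m) \subseteq X_S$. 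Hence $F(m) \nsubseteq X_S$ in all cases.

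The only mildly delicate point is making sure that the estimator state $m$ used in the forward direction is actually reachable (so that Proposition \ref{prop-ini} applies), and correspondingly that the $s$ chosen in the reverse direction can be taken to start from a concrete initial state of our choosing within $F(m)$; both of these follow from the definition of the initial-state estimator's accessible part and from Proposition \ref{prop-ini}, so I do not expect genuine obstacles beyond carefully invoking that proposition in each direction.
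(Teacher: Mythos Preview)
Your proposal is correct and follows essentially the same approach as the paper: both directions rely on Proposition~\ref{prop-ini} to identify $F(f^i_{obs}(X^i_{0,obs},P_{\Upsilon_{\mathbb{N}}}(s)))$ with $\mathcal{E}^i(P_{\Upsilon_{\mathbb{N}}}(s),X_0)$ and then argue the equivalence directly. Your case split in the ``only if'' direction (depending on whether $F(m)\cap(X_0\setminus X_S)$ is already nonempty) is slightly more explicit than the paper's argument, but the underlying logic is identical.
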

\begin{proof}
($\Rightarrow$) Suppose that system $G$ is DO-ISO. 
Then, for any $s\in L(G,X_S)$, there exists $s'\in L(G,X_0\backslash X_S)$ such that $P_{\Upsilon_{\mathbb{N}}}(s)=P_{\Upsilon_{\mathbb{N}}}(s')$.
Then, we know $\mathcal{E}^{i}(P_{\Upsilon_{\mathbb{N}}}(s),X_0\backslash X_S)\neq\emptyset$.
It is easy to verify that $\mathcal{E}^{i}(P_{\Upsilon_{\mathbb{N}}}(s),X_0)=\mathcal{E}^{i}(P_{\Upsilon_{\mathbb{N}}}(s),X_S)\cup\allowbreak \mathcal{E}^{i}(P_{\Upsilon_{\mathbb{N}}}(s),X_0\backslash X_S)$.
By Proposition~\ref{prop-ini}, $F(f^i_{0,obs}(X^i_{0,obs}, \allowbreak P_{\Upsilon_{\mathbb{N}}}(s)))\cap (X_0\backslash X_S)\neq\emptyset$. 
Then, $f^i_{0,obs}(X^i_{0,obs},P_{\Upsilon_{\mathbb{N}}}(s))\in X^{i}_{obs}$ and $F(f^i_{0,obs}(X^i_{0,obs},P_{\Upsilon_{\mathbb{N}}}(s)))\nsubseteq X_S$ hold, which completes the proof in this direction.
    
(``$\Leftarrow$'') Suppose that for any $m\in X^{i}_{obs}$, $F(m)\nsubseteq X_S$. 
According to Proposition~\ref{prop-ini}, there exists $s\in L(G,X_0)$ such that $m=f^i_{0,obs}(X^i_{0,obs},P_{\Upsilon_{\mathbb{N}}}(s))$ and $\mathcal{E}^{i}(P_{\Upsilon_{\mathbb{N}}}(s),X_0)=F(m)$.
Then, it gives $\mathcal{E}^{i}(P_{\Upsilon_{\mathbb{N}}}(s),X_0)\nsubseteq X_S$, indicating that there exists $s'\in L(G,X_0\setminus X_S)$ such that  $P_{\Upsilon_{\mathbb{N}}}(s)=P_{\Upsilon_{\mathbb{N}}}(s')$, i.e., $G$ is DO-ISO.
\end{proof}

In \cite{WuLafortune2013}, the notion of reversed automaton is introduced to verify centralized initial-state opacity. 
However, this method cannot be implemented directly to verify DO-ISO here, as the coordinator receives a set of partially ordered sequences at a time, rather than a single observation.
Nevertheless, inspired by this method, we present an exponential time approach to verify DO-ISO, compared with the one in Theorem \ref{theorem-ini-1}, which has doubly-exponential complexity. 
Instead of introducing the reversed automaton of the original system, we directly define the notion of a synchronization-reversed observer as follows.
\begin{definition}\label{def-sro}
	The synchronization-reversed observer of a system $G$ is a deterministic finite-state automaton $G^{\Upsilon_{\mathbb{N}}}_{R,obs}=Ac(X_{R,obs},T_c, f_{R,obs}, X)$, where
	\begin{itemize}
		\item $X_{R,obs}\subseteq 2^X\setminus\{\emptyset\}$ is the set of states;
		\item $T_c$ is the set of CSI-states of system $G$'s feasible CSS structure;
		\item $X\in X_{R,obs}$ is the initial state;
		\item $f_{R,obs}: X_{R,obs}\times T_c\rightarrow X_{R,obs}$ is the transition function defined as: for any $q\in X_{R,obs}$, $\tau\in T_c$,
		\begin{align*}
			f_{R,obs}(q,\tau)=\{x\in X|\exists x'\in q,(x,x')\in M(\tau)\}.
		\end{align*}
	\end{itemize}
\end{definition}

Intuitively, the differences between DO-observer and synchronization-reversed observer lie in the transition function and the initial state.
Function $f_{R,obs}$ computes the next state by reversing the state pairs in $M(\tau)$, i.e., it takes the union of the first components in $M(\tau)$ whose corresponding second components belong to $q$.
The initial state $X$ is the entire state space of system $G$.
Like function $f_{obs}$, the domain of $f_{R,obs}$ can be extended to $X_{R,obs}\times T^*_c$ in the same manner.
Given a sequence of CSI-states $\iota\in T^*_c$, $\iota_R$ is the reversed sequence of $\iota$.
We arrive at the following conclusion based on $G^{\Upsilon_{\mathbb{N}}}_{R,obs}$.

\begin{lemma}\label{lemma-reverse}
Let $G^{\Upsilon_{\mathbb{N}}}_{R,obs}$ be the synchronization-reversed observer of a system $G$ and for any $s\in L(G)$, let $\iota=P_{\Upsilon_{\mathbb{N}}}(s)$ be the sequence of CSI-states received by the coordinator. 
Then, it holds $\mathcal{E}^{i}(\iota,X_0)=f_{R,obs}(X,\iota_R)\cap X_0$.
\end{lemma}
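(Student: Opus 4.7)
The plan is to reduce the claim to a purely combinatorial statement about chains of state pairs drawn from the sets $M(\tau_i)$, then verify that statement by induction on the length of $\iota$. First I would write $\iota=\tau_1\tau_2\cdots\tau_n$ and apply Proposition~\ref{pro-current-string} (extended across synchronizations) to characterize $\mathcal{E}^{i}(\iota,X_0)$: a state $x_0\in X_0$ lies in $\mathcal{E}^{i}(\iota,X_0)$ if and only if there exist $x_1,x_2,\ldots,x_n\in X$ with $(x_{j-1},x_j)\in M(\tau_j)$ for every $j\in\{1,\ldots,n\}$. The extension from a single synchronization step to a chain follows because, by the definition of $P_{\Upsilon_{\mathbb{N}}}$, any string $u\in L(G,x_0)$ with $P_{\Upsilon_{\mathbb{N}}}(u)=\iota$ decomposes as $u=u_1 u_2\cdots u_n(u_{n+1})$, where each $u_j$ is the segment between consecutive synchronizations contributing the CSI-state $\tau_j$, and Proposition~\ref{pro-current-string} applied segment by segment yields the required pairs.

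Next I would introduce the ``backward reachability'' set $B(\iota):=\{x_0\in X\mid \exists x_1,\ldots,x_n\in X:(x_{j-1},x_j)\in M(\tau_j),\ j=1,\ldots,n\}$ and prove by induction on $n=|\iota|$ that $f_{R,obs}(X,\iota_R)=B(\iota)$. The base case $n=0$ is immediate since $B(\epsilon)=X=f_{R,obs}(X,\epsilon)$. For the inductive step I would write $\iota=\tau_1\iota''$ with $\iota''=\tau_2\cdots\tau_n$, so $\iota_R=\iota''_R\tau_1$ and hence $f_{R,obs}(X,\iota_R)=f_{R,obs}\bigl(f_{R,obs}(X,\iota''_R),\tau_1\bigr)$. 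Applying the induction hypothesis to $\iota''$ gives $f_{R,obs}(X,\iota''_R)=B(\iota'')$, and then unfolding the definition of $f_{R,obs}$ on the outer $\tau_1$ yields exactly the characterization of $B(\iota)$: a state $x_0$ belongs to $f_{R,obs}(B(\iota''),\tau_1)$ iff there is some $x_1\in B(\iota'')$ with $(x_0,x_1)\in M(\tau_1)$, which is precisely the statement that the chain $(x_0,x_1),(x_1,x_2),\ldots,(x_{n-1},x_n)$ exists.

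Combining the two characterizations finishes the proof: $\mathcal{E}^{i}(\iota,X_0)=B(\iota)\cap X_0=f_{R,obs}(X,\iota_R)\cap X_0$, which is the claimed equality.

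The main obstacle I expect is \emph{not} the induction itself, which is a bookkeeping exercise, but rather the careful bridging between the semantic definition of $\mathcal{E}^{i}(\iota,X_0)$ in terms of strings $u\in L(G,x_0)$ with $P_{\Upsilon_{\mathbb{N}}}(u)=\iota$, and the syntactic chain-of-pairs characterization used in the induction. Getting this bridge right requires invoking Proposition~\ref{pro-current-string} once per synchronization segment and verifying that the concatenation of the corresponding segment behaviors $u_j$ indeed realizes the full run $\vec{u}$ having $P_{\Upsilon_{\mathbb{N}}}(u)=\iota$. Once that correspondence is made explicit, the reverse observer's action collapses into a backward walk through the pair sets $M(\tau_j)$, and the equality drops out.
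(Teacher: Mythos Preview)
Your proposal is correct and follows essentially the same approach as the paper: the paper unrolls the extended $f_{R,obs}$ directly into the chain-of-pairs description $\{x_n\in X\mid \exists x_0,\ldots,x_n:(x_j,x_{j-1})\in M(\tau_j)\}$ and then invokes Proposition~\ref{pro-current-string} together with the definition of $\mathcal{E}^{i}$, exactly as you do. The only difference is that you spell out the induction on $|\iota|$ and the segment-by-segment bridging explicitly, whereas the paper leaves both implicit in a one-line appeal to ``the extended definition of $f_{R,obs}$'' and ``combining Proposition~\ref{pro-current-string} with the definition of $\mathcal{E}^{i}$.''
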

\begin{proof}
	By using the extended definition of $f_{R,obs}$, for any sequence of CSI-states $\iota=\tau_n\tau_{n-1}\dots\tau_1\in L(G^{\Upsilon_{\mathbb{N}}}_{obs})$, we find $f_{R,obs}(X,\iota_R)=\{x_n\in X|\exists x_0, x_1, x_2, \dots, x_n\in X,(x_n,x_{n-1})\in M(\tau_n),(x_{n-1},x_{n-2})\in M(\tau_{n-1}),\dots,\\(x_{1},x_{0})\in M(\tau_{1})\}$. 
	Then, $f_{R,obs}(X,\iota_R)\cap X_0=\{x_n\in X_0|\exists x_0, x_1, x_2, \dots, x_{n-1}\in X,x_n\in X_0,(x_n,x_{n-1})\in M(\tau_n),(x_{n-1},x_{n-2})\in M(\tau_{n-1}),\dots,(x_{1},x_{0})\in M(\tau_{1})\}$. 
	Combining Proposition \ref{pro-current-string}.1) with the definition of $\mathcal{E}^{i}(\iota,X_0)$, we have $f_{R,obs}(X,\iota_R)\cap X_0=\mathcal{E}^{i}(\iota,X_0)$, which completes the proof.
\end{proof}

\begin{theorem}
	Given a system $G=(X,E,\delta,X_0)$ under the DO-based protocol $\Upsilon_{\mathbb{N}}$ and  a set of secret states $X_S\subseteq X_0$, let $G^{\Upsilon_{\mathbb{N}}}_{R,obs}=Ac(X_{R,obs},T_c, f_{R,obs}, X)$ be its synchronization-reversed observer. System $G$ is DO-ISO w.r.t. $\Upsilon_{\mathbb{N}}$ and $X_S$ if and only if 
\begin{align*}
\forall q\in X_{R,obs}:q\cap X_0\neq\emptyset\Rightarrow (q\cap X_0)\nsubseteq X_S.
\end{align*}
\end{theorem}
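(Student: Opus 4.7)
The plan is to exploit Lemma~\ref{lemma-reverse}, which identifies $\mathcal{E}^{i}(\iota,X_0)$ with $f_{R,obs}(X,\iota_R)\cap X_0$, so that the initial-state estimate after any $s\in L(G)$ can be read off directly from a state of $G^{\Upsilon_{\mathbb{N}}}_{R,obs}$. Given this bridge, DO-ISO---which asserts that the initial-state estimate never collapses into $X_S$ whenever it meets $X_S$---translates immediately into a condition on the reachable states $q$ of the reversed observer, and both directions become short manipulations at the level of set inclusions.

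For the direction DO-ISO $\Rightarrow$ the observer condition, I would argue by contraposition. Suppose some reachable $q\in X_{R,obs}$ satisfies $\emptyset\neq q\cap X_0\subseteq X_S$. I pick $\iota_R\in T_c^*$ witnessing the reachability of $q$ and let $\iota$ denote its reverse; Lemma~\ref{lemma-reverse} then yields $\mathcal{E}^{i}(\iota,X_0)=q\cap X_0\subseteq X_S$. Any $x\in q\cap X_0$ produces, by definition of $\mathcal{E}^{i}$, a string $s\in L(G,x)\subseteq L(G,X_S)$ with $P_{\Upsilon_{\mathbb{N}}}(s)=\iota$. A DO-ISO-matching $s'\in L(G,X_0\setminus X_S)$ with $P_{\Upsilon_{\mathbb{N}}}(s')=\iota$ cannot exist, because its initial state would lie in $\mathcal{E}^{i}(\iota,X_0)\cap(X_0\setminus X_S)=(q\cap X_0)\cap(X_0\setminus X_S)=\emptyset$, contradicting DO-ISO.

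For the reverse direction, I would start from an arbitrary $s\in L(G,X_S)$, set $\iota=P_{\Upsilon_{\mathbb{N}}}(s)$, and let $q=f_{R,obs}(X,\iota_R)$. The initial state of $s$ lies in $X_S\cap\mathcal{E}^{i}(\iota,X_0)$, so Lemma~\ref{lemma-reverse} places it in $q\cap X_0$, making that intersection nonempty. The hypothesis then forces $q\cap X_0\nsubseteq X_S$, furnishing some $x_0\in(q\cap X_0)\setminus X_S=\mathcal{E}^{i}(\iota,X_0)\cap(X_0\setminus X_S)$, and by the definition of $\mathcal{E}^{i}$ there exists $s'\in L(G,x_0)\subseteq L(G,X_0\setminus X_S)$ with $P_{\Upsilon_{\mathbb{N}}}(s')=\iota$, establishing DO-ISO.

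The only subtle point is ensuring that every reachable $q$ in the $Ac$-restricted $G^{\Upsilon_{\mathbb{N}}}_{R,obs}$ with $q\cap X_0\neq\emptyset$ corresponds to a genuinely realizable sequence $\iota\in\bigcup_{s\in L(G)}P_{\Upsilon_{\mathbb{N}}}(s)$, so that $\mathcal{E}^{i}(\iota,X_0)$ actually refers to a concrete execution of $G$; this is handed to us for free by Lemma~\ref{lemma-reverse} together with the nonemptiness of $q\cap X_0$. With this observation in hand, everything else is bookkeeping, so I do not anticipate a real obstacle beyond being careful about which direction of the $\iota\leftrightarrow\iota_R$ reversal is in play at each step.
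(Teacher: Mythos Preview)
Your proposal is correct and follows essentially the same route as the paper: both arguments hinge on Lemma~\ref{lemma-reverse} to identify $\mathcal{E}^{i}(\iota,X_0)$ with $f_{R,obs}(X,\iota_R)\cap X_0$ and then reduce DO-ISO to the set-inclusion condition on reachable $q$. The only cosmetic difference is that the paper handles the ($\Leftarrow$) direction by contraposition (assuming not DO-ISO and producing a violating $q$), whereas you give a direct argument; your ``subtle point'' about realizability of $\iota$ is exactly the step the paper leaves implicit when it invokes Lemma~\ref{lemma-reverse} together with $q\cap X_0\neq\emptyset$.
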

\begin{proof}
According to Theorem \ref{theorem-ini-1}, the system is DO-ISO if and only if its initial state estimates should never be a subset of $X_S$. 
We proceed by contrapositive.

($\Rightarrow$)
Assume that there exists $q\in X_{R,obs}$ such that $q\cap X_0\allowbreak\neq\emptyset$ and $ (q\cap X_0)\subseteq X_S$.
By the definition of $G^{\Upsilon_{\mathbb{N}}}_{R,obs}$ and Lemma \ref{lemma-reverse}, there exists $s\in L(G)$ such that $\mathcal{E}^{i}(\iota,X_0)=q\cap X_0$ where $\iota=P_{\Upsilon_{\mathbb{N}}}(s)$ and $q=f_{R,obs}(X,\iota_R)$.
Therefore, $\mathcal{E}^{i}(\iota,X_0)\subseteq X_S$, indicating that the system is not DO-ISO.

($\Leftarrow$)
If the system is not DO-ISO, there exists $s\in L(G)$ such that $\mathcal{E}^{i}(\iota,X_0)\subseteq X_S$ where $\iota=P_{\Upsilon_{\mathbb{N}}}(s)$.
Due to Lemma \ref{lemma-reverse}, we have $\mathcal{E}^{i}(\iota,X_0)=f_{R,obs}(X,\iota_R)\cap X_0$, suggesting that there exists $q=f_{R,obs}(X,\iota_R)\in X_{R,obs}$ such that $q\cap X_0\neq\emptyset$ and $ (q\cap X_0)\subseteq X_S$.
This completes the proof.

\end{proof}

\begin{remark}
	Both approaches analyzed above are based on the structure of the system's CSS structure.
	Assuming that the CSS structure is computed beforehand, the computational complexity of constructing a synchronization-reversed observer is equivalent to that of computing a DO-observer, i.e., $O(\Delta_c\cdot 2^{|X|})$.
	The approach utilizing the DO-based initial-state estimator is of high complexity, as it stores more information, namely, the sets of pairs of possible initial and current states. 
	When verifying DO-ISE, one can choose the appropriate method based on the underlying requirements.
\end{remark}

\begin{figure}[htbp]
	\centering
	\includegraphics[scale=0.7]{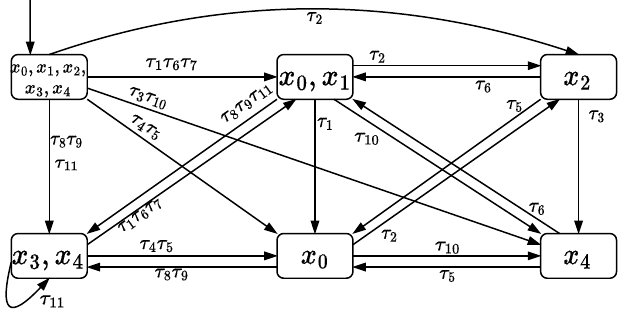}
	\caption{The synchronization-reversed observer $G^{\Upsilon_{\mathbb{N}}}_{R,obs}$ for $G$. The transitions labeled as ``$\tau$'' correspond to the labels shown in Fig.~\ref{fig-CSS-structure-1}.}
	\label{fig-sro}
\end{figure}

\begin{example}
	Consider again the system $G$ in Fig.~\ref{fig-example-1}, where $X_0=X$ and $X_S=\{x_0\}$.
	The feasible CSS structure of $G$ is shown in Fig.~\ref{fig-CSS-structure-1}.
	Based on Definition \ref{def-sro}, the synchronization-reversed observer is constructed and shown in Fig.~\ref{fig-sro}.
    For example, since $(x_4,x_2)\in M(\tau_3)$, there exists a transition relation $(\{x_2\},\tau_3,\{x_4\})\in f_{R,obs}$ in $G^{\Upsilon_{\mathbb{N}}}_{R,obs}$.
    By taking $\iota_R=\tau_6\tau_1$, after observing the sequence of CSI-states $\iota=\tau_1\tau_6$, the outside observer will know for sure that the system initial state was $x_0$.
    Therefore, this system is not DO-ISO w.r.t. $\Upsilon_{\mathbb{N}}$ and $X_S$.\hfill\rule{1ex}{1ex}
\end{example}

\subsection{Verification of DO-based Current-State-at-Synchronization Opacity}
The notion of centralized current-state opacity requires that the system should possibly reside in a non-secret state at any given instant.
However, this property cannot be directly applied to a system under the DO-based protocol, where several events, unknown to the coordinator (and the outside observer) until the next synchronization, will be executed in the system after each synchronization.
To address this, a new privacy requirement is introduced, focusing solely on the timing immediately after each synchronization.
Since the synchronization instant represents the most clearly defined and predictable moment in the system's state, it is crucial to ensure that the system's state information at these instants remains confidential.
As noted in Remark~\ref{remark-aftersync}, addressing the privacy/security considerations between synchronizations is left for future research due to the complexity of this topic, which cannot be fully covered here within the allowed space.
	\begin{definition}
		(DO-CSSO) A system $G=(X,E,\delta,X_0)$ under DO-based protocol $\Upsilon_{\mathbb{N}}$ and a set of secret states $X_S\subseteq X$ is said to be \textit{DO-based current-state-at-synchronization opaque}  (w.r.t. $\Upsilon_{\mathbb{N}}$ and $X_S$) if the intruder can never know with certainty if the system resides in a secret state immediately following a synchronization, i.e.,
			\begin{multline*}
				\forall x_0\in X_0, \forall s\in L(G,x_0): \operatorname{UR}(\delta(x_0,\tilde{s}))\cap X_S\neq\emptyset
				\Rightarrow\\
				\exists x_0'\in X_0, \exists s'\in L(G,x_0'): P_{\Upsilon_{\mathbb{N}}}(s)=P_{\Upsilon_{\mathbb{N}}}(s')\land\\
				\operatorname{UR}(\delta(x_0',\tilde{s'}))\cap (X\backslash X_S)\neq\emptyset.
			\end{multline*}
	\end{definition}
	In DO-CSSO, for any string $s\in L(G,x_0)$, only the string $\tilde{s}$ is used to describe the privacy requirements.
	If the system may reside in a secret state (i.e., $\operatorname{UR}(\delta(x_0,\tilde{s}))\cap X_S\neq\emptyset$), there should be another string $s'\in L(G,x_0')$ that has the same DO-projection as $s$ (i.e., $P_{\Upsilon_{\mathbb{N}}}(s)=P_{\Upsilon_{\mathbb{N}}}(s')$) and leads the system to a non-secret state (i.e., $\operatorname{UR}(\delta(x_0',\tilde{s'}))\cap (X\backslash X_S)\neq\emptyset$).
	In this situation, the DO-observer can be used to verify this property by simply checking if there are non-secret states within each state of the DO-observer.
\begin{theorem}\label{theorem-do-CSSO}
	Given a system $G=(X,E,\delta,X_0)$ under DO-based protocol $\Upsilon_{\mathbb{N}}$ and  a set of secret states $X_S\subseteq X$, let $G^{\Upsilon_{\mathbb{N}}}_{obs}=Ac(X_{obs},T_c,f_{obs},X_{obs,0})$ be its DO-observer. System $G$ is DO-CSSO w.r.t. $\Upsilon_{\mathbb{N}}$ and $X_S$ if and only if for any $q\in X_{obs}$, $q\nsubseteq X_S$.
\end{theorem}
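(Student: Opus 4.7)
The plan is to read off the theorem as a translation between the DO-CSSO condition, which quantifies over individual strings $s \in L(G)$, and the state-set structure of $G^{\Upsilon_{\mathbb{N}}}_{obs}$, using Proposition~\ref{pro-obs} as the bridge. Recall that Proposition~\ref{pro-obs} says $f_{obs}(X_{obs,0}, P_{\Upsilon_{\mathbb{N}}}(s)) = \mathcal{E}^c(P_{\Upsilon_{\mathbb{N}}}(s), X_0)$; and the defining formula for $\mathcal{E}^c$ unpacks into precisely the set of states of the form $\operatorname{UR}(\delta(x_0, \tilde{u}))$ with $x_0 \in X_0$, $u \in L(G)$, and $P_{\Upsilon_{\mathbb{N}}}(u) = P_{\Upsilon_{\mathbb{N}}}(s)$. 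Thus ``the DO-observer state at $\iota$ contains a non-secret state'' is literally ``some run with DO-projection $\iota$ ends, right after the last synchronization and its unobservable reach, in a non-secret state.''

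For the forward direction ($\Rightarrow$), I would pick an arbitrary $q \in X_{obs}$; since $G^{\Upsilon_{\mathbb{N}}}_{obs}$ is defined as the accessible part, there exists $s \in L(G)$ with $q = f_{obs}(X_{obs,0}, P_{\Upsilon_{\mathbb{N}}}(s))$, hence $q = \mathcal{E}^c(P_{\Upsilon_{\mathbb{N}}}(s), X_0)$. If $q \cap X_S = \emptyset$, then $q \nsubseteq X_S$ because $X_{obs}$ excludes $\emptyset$. Otherwise, some $x \in q \cap X_S$ witnesses, by the definition of $\mathcal{E}^c$, an initial state $x_0 \in X_0$ and a string $u \in L(G,x_0)$ with $P_{\Upsilon_{\mathbb{N}}}(u) = P_{\Upsilon_{\mathbb{N}}}(s)$ and $\operatorname{UR}(\delta(x_0, \tilde{u})) \cap X_S \neq \emptyset$. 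Applying the DO-CSSO hypothesis to this $u$ gives some $x_0' \in X_0$ and $s' \in L(G, x_0')$ with $P_{\Upsilon_{\mathbb{N}}}(s') = P_{\Upsilon_{\mathbb{N}}}(s)$ and some $y \in \operatorname{UR}(\delta(x_0', \tilde{s'})) \setminus X_S$. That same $y$ sits inside $\mathcal{E}^c(P_{\Upsilon_{\mathbb{N}}}(s), X_0) = q$ by definition, so $q \nsubseteq X_S$.

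For the converse ($\Leftarrow$), I would start from any $x_0 \in X_0$ and $s \in L(G,x_0)$ with $\operatorname{UR}(\delta(x_0, \tilde{s})) \cap X_S \neq \emptyset$, set $\iota = P_{\Upsilon_{\mathbb{N}}}(s)$, and consider $q := f_{obs}(X_{obs,0}, \iota) = \mathcal{E}^c(\iota, X_0)$, which is a state of $X_{obs}$ by Proposition~\ref{pro-obs}(i). The hypothesis delivers $y \in q \setminus X_S$, and unpacking $y \in \mathcal{E}^c(\iota, X_0)$ yields the required $x_0' \in X_0$ and $s' \in L(G, x_0')$ with $P_{\Upsilon_{\mathbb{N}}}(s') = \iota$ and $y \in \operatorname{UR}(\delta(x_0', \tilde{s'})) \cap (X \setminus X_S)$, which is exactly the existential clause of DO-CSSO.

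The hard part is not combinatorial but definitional bookkeeping: one must carefully distinguish the string $s$ from its last-synchronized prefix $\tilde{s}$, and verify that the ``$\operatorname{UR}$-at-$\tilde{s}$'' formulation built into the DO-CSSO definition matches the one already baked into $\mathcal{E}^c$ (cf.\ Remark~\ref{remark-aftersync}). The other sneaky point is to notice that ``$\forall q \in X_{obs}$'' is meaningful precisely because $X_{obs}$ is the accessible part: every $q$ corresponds to a concrete run, so the hypothesis is never vacuously applied. Once these two observations are in place, the proof reduces to the two inclusion arguments above.
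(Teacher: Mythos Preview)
Your proposal is correct and follows essentially the same approach as the paper: both use Proposition~\ref{pro-obs} to identify each reachable observer state $q$ with $\mathcal{E}^c(P_{\Upsilon_{\mathbb{N}}}(s),X_0)$ for some $s\in L(G)$, and then check whether this set meets $X\setminus X_S$. The only cosmetic difference is that the paper first collapses the DO-CSSO definition to the equivalent statement ``$\forall s\in L(G):\mathcal{E}^c(P_{\Upsilon_{\mathbb{N}}}(s),X_0)\cap(X\setminus X_S)\neq\emptyset$'' and then argues by contrapositive, whereas you work directly with the implication form; your extra case split on $q\cap X_S=\emptyset$ is harmless bookkeeping that the paper's reformulation sidesteps.
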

\begin{proof}
According to the definition of DO-CSE, it is trivial to deduce that system $G$ is DO-CSSO if for all $s\in L(G)$, $\mathcal{E}^c(P_{\Upsilon_{\mathbb{N}}}(s),X_0)\cap (X\backslash X_S)\neq\emptyset$.
We proceed with the contrapositive.

($\Rightarrow$) Assume that there exists $q\in X_{obs}$, $q\subseteq X_S$.
According to the definition of DO-observer and Proposition~\ref{pro-obs}, there exists $s\in L(G)$, $q=\mathcal{E}^c(P_{\Upsilon_{\mathbb{N}}}(s),X_0)$.
Thus, it holds $\mathcal{E}^c(P_{\Upsilon_{\mathbb{N}}}(s),X_0)\cap (X\backslash X_S)=\emptyset$.
The system is not DO-CSSO.

($\Leftarrow$) If the system is not DO-CSSO, there exists $s\in L(G)$, $\mathcal{E}^c(P_{\Upsilon_{\mathbb{N}}}(s),X_0)\cap (X\backslash X_S)=\emptyset$.
According to the definition of DO-observer and Proposition~\ref{pro-obs}, there exists $q\in X_{obs}$, $q\subseteq X_S$.
This completes the proof.
\end{proof}
\begin{example}
Let us consider the system in Example~\ref{example-2} where the set of initial states is $\{x_0,x_1\}$ and the set of secret states is $X_S=\{x_2\}$.
Its DO-observer is shown in Fig.~\ref{DO-observer-fig}.
According to Theorem~\ref{theorem-do-CSSO}, system $G$ is not DO-CSSO w.r.t. $\Upsilon_{\mathbb{N}}$ and $X_S=\{x_2\}$.
If the sequence of CSI-states $\tau_5\tau_3$ is sent to the coordinator, the outside observer can determine that the system is in secret state $x_2$.
\end{example}

\subsection{Discussion of Diagnosability Under the DO-based Protocol}
It is shown in \cite{Hadjicostis2020} that the standard fault diagnosis problem of a DES can be reduced to a state isolation/estimation problem.
By properly refining the state space of the original system to incorporate the occurrence of fault events, similar to the procedure in \cite{Sampath1995}, the refined system is constructed.
The states in the refined system  are classified into two categories: those augmented with the symbol ``$F$'', indicating that the system has encountered one or more fault events before reaching these states, and those not augmented with ``$F$'', implying the system has not executed any fault events up to that point.
Consequently, this problem is reduced to finding so-called $F$-indeterminate cycles \cite{Sampath1995} in the observer of the refined system.
Furthermore, the results in \cite{YooLafortune2002, JiangHuang2001} show that the verification of diagnosability can be tested in polynomial time by constructing a parallel composition structure called a verifier.

Diagnosability under the DO-based protocol was preliminarily studied in \cite{Hadjicostis2020} through a structure called $GSDS_m$, built using a partial-order-based approach.
However, this method incurs significant computational cost.
In contrast, the CSS structure introduced in this paper enables the use of standard diagnosis techniques in the DO-based setting more efficiently.
Unlike other decentralized architectures where the coordinator receives local state estimates or decisions, the DO-based coordinator receives partially ordered system behaviors (CSI-states).
As shown in Proposition~\ref{pro-current-string}, the CSS structure captures the evolution of system states under all possible CSI-states, each corresponding to specific system behaviors.
This enables the coordinator to derive state estimates from CSI-states, thereby allowing the construction of a DO-observer.
Although some observable events may not be immediately available to the coordinator, state estimates are updated upon their reception at the next synchronization.
In this regard, the DO-based protocol exhibits both centralized and decentralized characteristics, and the verification of diagnosability can follow an approach analogous to the centralized setting.
To stay focused, we omit the formal construction details in this paper.
\section{Conclusions}
In this paper, we present a relatively efficient approach to the problem of state estimation for a DES under a decentralized observation architecture where observation sites send to the coordinator the sequences of observations they recorded.
We also discuss how to verify state-isolation DO-based properties using this approach.
To this end, we first provide the formal description of the DO-based protocol.
Then, we interpret a given DO-based strategy into a structure called \textit{Complete Synchronizing Sequence structure} (CSS structure), which includes the relations between states under any SI-state.
Based on the CSS structure, the DO-observer and DO-based initial state estimator are defined and constructed, with the processes simply involving the union of states without exploring possible system behavior at each synchronization step.
It is shown that, using the CSS structure, DO-based initial-state opacity and DO-based current-state-at-synchronization opacity can be verified similarly to the standard centralized observation architecture.

In the future, we plan to utilize the CSS structure to develop methods for detecting fault events in the presence of local and global errors, i.e., when the synchronization information received by the coordinator is tampered.
It is also worth exploring the privacy/security considerations between synchronizations.


\begin{thebibliography}{99}
	
	\bibitem{Ramadge1986} P. Ramadge, ``Observability of discrete event systems,'' in \textit{Proceedings of 25th IEEE Conference on Decision and Control (CDC)}, 1986, pp.~1108--1112.
	
	\bibitem{Hadjicostis2020} C. N. Hadjicostis, \textit{Estimation and Inference in Discrete Event Systems}. Springer, 2020.
	
	\bibitem{WangLafortuneLin2007} W. L. Wang, S. Lafortune, and F. Lin, ``An algorithm for calculating indistinguishable states and clusters in finite-state automata with partially observable transitions,'' \textit{Systems \& Control Letters}, vol.~56, no.~9--10, pp.~656–661, 2007.
	
	\bibitem{LinWangHanShe2020} F. Lin, W. L. Wang, L. T. Han, and B. She, ``State estimation of multichannel networked discrete event systems,'' \textit{IEEE Transactions on Control of Network Systems}, vol.~7, no.~1, pp.~53--63, 2020.
	
	\bibitem{HanWangLiChenChen2023} X. G. Han, J. L. Wang, Z. W. Li, X. Y. Chen, and Z. Q. Chen, ``Revisiting state estimation and weak detectability of discrete-event systems,'' \textit{IEEE Transactions on Automation Science and Engineering}, vol.~20, no.~1, pp.~662--674, 2023.
	
	
	\bibitem{DeboukLafortuneTeneketzis2000} R. Debouk, S. Lafortune, and D. Teneketzis, ``Coordinated decentralized protocols for failure diagnosis of discrete event systems,'' \textit{Discrete Event Dynamic Systems}, vol.~10, no.~1, pp.~33--86, 2000.
	
	\bibitem{Sampath1995} M. Sampath, R. Sengupta, S. Lafortune, K. Sinnamohideen, and D. Teneketzis, ``Diagnosability of discrete-event systems,'' \textit{IEEE Transactions on Automatic Control}, vol.~40, no.~9, pp.~1555--1575, 1995.
	
	\bibitem{YooLafortune2002} T.-S. Yoo and S. Lafortune, ``Polynomial-time verification of diagnosability of partially observed discrete-event systems,'' \textit{IEEE Transactions on Automatic Control}, vol.~47, no.~9, pp.~1491--1495, 2002.
	
	\bibitem{JiangHuang2001} S. B. Jiang, Z. D. Huang, V. Chandra, and R. Kumar, ``A polynomial algorithm for testing diagnosability of discrete-event systems,'' \textit{IEEE Transactions on Automatic Control}, vol.~46, no.~8, pp. 1318--1321, 2001.
	
	
	
	\bibitem{ShuLinYing2007}  S. L. Shu, F. Lin, and H. Ying, ``Detectability of discrete event systems,'' \textit{IEEE Transactions on Automatic Control}, vol.~52, no.~12, pp.~2356--2359, 2007.
	
	\bibitem{YinLiWang2018} X. Yin, Z. J. Li, and W. L. Wang, ``Trajectory detectability of discrete-event systems,'' \textit{Systems \& Control Letters}, vol.~119, pp.~101–107, 2018.
	
	\bibitem{SabooriHadjicostis2007} A. Saboori and C. N. Hadjicostis, ``Notions of security and opacity in discrete event systems,'' in \textit{Proceedings of 46th IEEE Conference on Decision and Control (CDC)}, 2007, pp.~5056--5061.
	
	\bibitem{JacobLesageFaure2016} R. Jacob, J. Lesage, and J. Faure, ``Overview of discrete event systems opacity: Models, validation, and quantification,'' \textit{Annual Reviews in Control}, vol.~41, pp.~135--146, 2016.
	
	
	\bibitem{QiuKumar2006} W. B. Qiu and R. Kumar, ``Decentralized failure diagnosis of discrete event systems,'' \textit{IEEE Transactions on Systems, Man, Cybernetics-Part A: Systems and Humans}, vol.~36, no.~2, pp.~384--395, 2006.
	
	\bibitem{Cassez2012}
	F. Cassez, ``The complexity of codiagnosability for discrete event and timed systems,'' \textit{IEEE Transactions on Automatic Control}, vol. 57, no. 7, pp. 1752--1764, 2012.

	\bibitem{NunesMoreiraAlvesCarvalhoBasilio2018} C. E. Nunes, M. V. Moreira, M. V. Alves, L. K. Carvalho, and J. C. Basilio, ``Codiagnosability of networked discrete event systems subject to communication delays and intermittent loss of observation,'' \textit{Discrete Event Dynamic Systems}, vol.~28, pp.~215--246, 2018.
	
	
	\bibitem{TakaiUshio2012} S. Takai and T. Ushio, ``Verification of codiagnosability for discrete event systems modeled by Mealy automata with nondeterministic output functions,'' \textit{IEEE Transactions on Automatic Control}, vol.~57, no.~3, pp.~798--804, 2012.
	
	\bibitem{RanSuGiuaSeatzu2018} N. Ran, H. Y. Su, A. Giua, and C. Seatzu, ``Codiagnosability analysis of bounded Petri nets,'' \textit{IEEE Transactions on Automatic Control}, vol.~63, no.~4, pp.~1192--1199, 2018.
	
	
	\bibitem{TomolaCabralCarvalhoMoreira2017} J. H. A. Tomola, F. G. Cabral, L. K. Carvalho, and M. V. Moreira, ``Robust disjunctive-codiagnosability of discrete-event systems against permanent loss of observations,'' \textit{IEEE Transactions on Automatic Control}, vol.~62, no.~11, pp.~5808--5815, 2017.
	
	\bibitem{TaKaiKumar2017} S. Takai and R. Kumar, ``A generalized framework for inference-based diagnosis of discrete event systems capturing both disjunctive and conjunctive decision-making,'' \textit{IEEE Transactions on Automatic Control}, vol.~62, no.~6, pp.~2778--2793, 2017.
	
	
	\bibitem{KeroglouHadjicostis2018} C. Keroglou and C. N. Hadjicostis, ``Distributed fault diagnosis in discrete event systems via set intersection refinements,'' \textit{IEEE Transactions on Automatic Control}, vol.~63, no.~10, pp.~3601--3607, 2018.
	
	\bibitem{OliveiraCabralMoreira2022} V. Oliveira, F. G. Cabral, and M. V. Moreira, ``$K$-loss robust codiagnosability of discrete-event systems,'' \textit{Automatica}, vol.~140, p. 110222, 2022.
	
	\bibitem{LiHadjicostisWu2021} Y. T. Li, C. N. Hadjicostis, and N. Q. Wu, ``Error- and tamper-tolerant decentralized diagnosability of discrete event systems under cost constraints,'' in \textit{Proceedings of European Control Conference (ECC)}, pp.~42--47, 2021. 
	
	\bibitem{ZaytoonLafortune2013} J. Zaytoon and S. Lafortune, ``Overview of fault diagnosis methods for discrete event systems,'' \textit{Annual Reviews in Control}, vol.~37, no.~2, pp.~308–320, 2013.
	
	\bibitem{LafortuneLinHadjicostis2018} S. Lafortune, F. Lin, and C. N. Hadjicostis, ``On the history of diagnosability and opacity in discrete event systems,'' \textit{Annual Reviews in Control}, vol.~45, pp.~257--266, 2018.
	
	
	\bibitem{BasilioHadjicostisSu2021} J. Basilio, C. N. Hadjicostis, and R. Su, ``Analysis and control for resilience of discrete event systems: Fault diagnosis, opacity and cyber security,'' \textit{Foundations and Trends in Systems and Control}, vol.~8, no.~4, pp.~285--443, 2021.
	
	
	\bibitem{Mazare2004} L. Mazar\'e, ``Using unification for opacity properties,'' in \textit{Proceedings of the Workshop on Issues in the Theory of Security}, 2004, pp.~165--176.
	
	\bibitem{BryansKoutnyMazarRyan2008} J. W. Bryans, M. Koutny, L. Mazar\'e, and P. Y. Ryan, ``Opacity generalised to transition systems,'' \textit{International Journal of Information Security}, vol.~7, no.~6, pp.~421–435, 2008.
	
	
	\bibitem{SabooriHadjicostis2008ini} A. Saboori and C. N. Hadjicostis, ``Verification of initial-state opacity in security applications of DES,'' in \textit{Proceedings of 9th International Workshop on Discrete Event Systems (WODES)}, 2008, pp.~328--333.
	
	\bibitem{SabooriHadjicostis2011} A. Saboori and C. N. Hadjicostis, ``Verification of $K$-step opacity and analysis of its complexity,'' \textit{IEEE Transactions on Automation Science and Engineering}, vol.~8, no. 3, pp.~549--559, 2011.
	
	\bibitem{WuLafortune2013} Y. Wu and S. Lafortune, ``Comparative analysis of related notions of opacity in centralized and coordinated architectures,'' \textit{Discrete Event Dynamic Systems}, vol.~23, no.~3, pp.~307--339, 2013.
	
	\bibitem{YinLafortune2017} X. Yin and S. Lafortune, ``A new approach for the verification of infinite-step and $K$-step opacity using two-way observers,'' \textit{Automatica}, vol.~80, pp.~162--171, 2017.
	
	
	
	\bibitem{YongLiSeatzuGiua2017} Y. Tong, Z. W. Li, C. Seatzu, and A. Giua, ``Verification of state-based opacity using Petri nets,'' \textit{IEEE Transactions on Automatic Control}, vol.~62, no.~6, pp.~2823--2837, 2017.
	
	
	\bibitem{TongLanSeatzu2022} Y. Tong, H. Lan, and C. Seatzu, ``Verification of $K$-step and infinite- step opacity of bounded labeled Petri nets,'' \textit{Automatica}, vol.~140, p.~110221, 2022.
	
	
	\bibitem{CongFantiManginiL2018} X. Y. Cong, M. P. Fanti, A. M. Mangini, and Z. W. Li, ``On-line verification of current-state opacity by Petri nets and integer linear programming,'' \textit{Automatica}, vol.~94, pp.~205--213, 2018.
	
	\bibitem{CongFantiManginiLi2019} X. Y. Cong, M. P. Fanti, A. M. Mangini, and Z. W. Li, ``On-line verification of initial-state opacity by Petri nets and integer linear programming,'' \textit{ISA Transactions}, vol.~93, pp.~108--114, 2019.
	
	
	\bibitem{DongWuLi2024} Y. F. Dong, N. Q. Wu, and Z. W. Li, ``State-based opacity verification of networked discrete event systems using labeled Petri nets,'' \textit{IEEE/CAA Journal of Automatica Sinica}, vol.~11, no.~5, pp.~1274--1291, 2024.
	
	\bibitem{YangDengQiuJiang2021} J. Yang, W. Deng, D. Qiu, and C. Jiang, “Opacity of networked discrete event systems,” \textit{Information Sciences}, vol.~543, pp.~328–344, 2021.
	
	
	\bibitem{FalconeMarchand2015} Y. Falcone and H. Marchand, ``Enforcement and validation (at runtime) of various notions of opacity,” \textit{Discrete Event Dynamic Systems}, vol.~25, no.~4, pp.~531–570, 2015.
	
	\bibitem{MaYinLi2021} Z. Y. Ma, X. Yin, and Z. W. Li, ``Verification and enforcement of strong infinite- and $k$-step opacity  using state recognizers,'' \textit{Automatica}, vol.~133, p. 109838, 2021.
	
	\bibitem{HanZhangZhangLiChen2023} X. G. Han, K. Z. Zhang, J. H. Zhang, Z. W. Li, and Z. Q. Chen, ``Strong current-state and initial-state opacity of discrete-event systems'', \textit{Automatica}, vol.~148, p. 110756, 2023.
	
	\bibitem{PaoliLin2012} A. Paoli and F. Lin, ``Decentralized opacity of discrete event systems,'' in \textit{Proceedings of 2012 American Control Conference (ACC)}, pp.~6083--6088, 2012.
	
	\bibitem{ZhuLiWu2022} G. H. Zhu, Z. W. Li, and N. Q. Wu, ``Online verification of $K$-step opacity by Petri nets in centralized and decentralized structures,'' \textit{Automatica}, vol.~145, p. 110528, 2022.
	
	
	
	\bibitem{BasilioLafortune2009} J. C. Basilio and S. Lafortune, ``Robust codiagnosability of discrete event systems,'' in \textit{Proceedings of American Control Conference (ACC)}, 2009, pp.~2202--2209.
	
	\bibitem{VianaAlvesBasilio2022} G. S. Viana, M. V. S. Alves, and J. C. Basilio, ``Codiagnosability of networked discrete event systems with timing structure,'' \textit{IEEE Transactions on Automatic Control}, vol.~67, no.~8, pp.~3933--3948, 2022.
	
	
	
	
	
	
	
	
	\bibitem{Hadjicostis2016Partial} C. N. Hadjicostis and C. Seatzu, ``Decentralized state estimation in discrete event systems under partially ordered observation sequences,'' in \textit{Proceedings of 13th International Workshop on Discrete Event Systems (WODES)}, 2016, pp.~367--372.
	
	\bibitem{CassandrasLafortune2008} C. G. Cassandras and S. Lafortune, \textit{Introduction to Discrete Event Systems}. Springer, 2008.
	
	
	\bibitem{SunHadjicostisLi2023} D. J. Sun, C. N. Hadjicostis, and Z. W. Li, ``Decentralized state estimation via breadth-first search through partially ordered observation sequences,'' in \textit{Proceedings of 62nd IEEE Conference on Decision and Control (CDC)}, 2023, pp.~6917--6922.
	
	\bibitem{SunHadjicostisLi2025} D. J. Sun, C. N. Hadjicostis, and Z. W. Li, ``Global and local error-tolerant decentralized state estimation under partially ordered observations," \textit{IEEE Transactions on Automatic Control}, doi: 10.1109/TAC.2025.3565213, 2025.
\end{thebibliography}
\end{document}